\definecolor{shadecolor}{gray}{0.9}
\newcommand{\cmark}{\ding{51}}
\newcommand{\xmark}{\ding{55}}
\newcommand{\RR}{\mathbbmss{R}} 
\newcommand{\EE}[1]{\mathbbmss{E}\left[#1\right]} 
\newcommand{\PP}[1]{\mathbbmss{P}\left[#1\right]} 
\newcommand{\VV}[1]{\mathbbmss{V}\left[#1\right]} 
\newcommand{\Normal}[2]{\mathcal{N}\left(#1,#2\right)} 
\newcommand{\pto}{\overset{p}{\rightarrow}} 
\newcommand{\dto}{\leadsto} 
\newcommand{\Lto}[1]{\overset{\mathbbmss{L}^{#1}}{\rightarrow}} 
\newcommand{\data}{\mathcal{D}} 
\newcommand{\ATT}{\normalfont \text{ATT}}
\newcommand{\indep}{\perp \!\!\!\! \perp}
\newtheorem{theorem}{Theorem}[section]
\newtheorem{lemma}[theorem]{Lemma}
\newtheorem{proposition}[theorem]{Proposition}
\newtheorem{definition}[theorem]{Definition}
\theoremstyle{definition}
\newtheorem{example}[theorem]{Example}
\newtheorem{remark}[theorem]{Remark}
\newtheorem{assumption}[theorem]{Assumption}
\newcommand{\thetatarget}{\theta^\star}
\newcommand{\etatarget}{\eta^\star}
\newcommand{\pitarget}{\pi^\star}
\newcommand{\mutarget}{\mu^\star}
\newcommand{\gammatarget}{\gamma^\star}
\newcommand{\influence}[1]{\varphi\left(#1\right)}
\newcommand{\finfluence}[1]{\varphi^F\left(#1\right)}
\newcommand{\sqinfluence}[1]{\varphi^2\left(#1\right)}
\newcommand{\jPtarget}{\mathbbmss{P}^\star}
\newcommand{\hatEE}[1]{\hat{\mathbbmss{E}}_n\left[#1\right]} 
\newcommand{\lorenzo}[1]{\textcolor{black}{#1}}
\title{\textbf{Efficient Difference-in-Differences Estimation when Outcomes are Missing at Random}}
\author[1,2]{Lorenzo Testa\thanks{Work done during an internship at Amazon.}}
\author[1]{Edward H.~Kennedy}
\author[3,4]{Matthew Reimherr}
\affil[1]{Department of Statistics \& Data Science, Carnegie Mellon University, Pittsburgh PA, US}
\affil[2]{L'EMbeDS, Sant'Anna School of Advanced Studies, Pisa, Italy}
\affil[3]{Department of Statistics, Penn State University, University Park PA, US}
\affil[4]{Amazon Science}
\date{\vspace{-0.1em}\texttt{lorenzo@stat.cmu.edu} \quad \texttt{edward@stat.cmu.edu} \quad \texttt{reimherr@amazon.com}\\ \vspace{1em}\today}
\begin{document}

\maketitle

\begin{abstract}
    The Difference-in-Differences (DiD) method is a fundamental tool for causal inference, yet its application is often complicated by missing data. Although recent work has developed robust DiD estimators for complex settings like staggered treatment adoption, these methods typically assume complete data and fail to address the critical challenge of outcomes that are missing at random (MAR) -- a common problem that invalidates standard estimators. We develop a rigorous framework, rooted in semiparametric theory, for identifying and efficiently estimating the Average Treatment Effect on the Treated (ATT) when either pre- or post-treatment (or both) outcomes are missing at random. We first establish nonparametric identification of the ATT under two minimal sets of sufficient conditions. For each, we derive the semiparametric efficiency bound, which provides a formal benchmark for asymptotic optimality. We then propose novel estimators that are asymptotically efficient, achieving this theoretical bound. A key feature of our estimators is their multiple robustness, which ensures consistency even if some nuisance function models are misspecified. We validate the properties of our estimators and showcase their broad applicability through an extensive simulation study.
\end{abstract}

\section{Introduction}
The Difference-in-Differences (DiD) method stands as a cornerstone of modern applied research in economics \citep{lechner2011estimation}, social sciences \citep{greene2021review}, and public health \citep{wing2018designing}, prized for its intuitive and powerful approach to estimating the causal effects of policies and interventions from \textit{observational panel}, or \textit{longitudinal}, data. In its canonical form, the method is used to estimate the \textit{average treatment effect on the treated} (ATT), by leveraging a core identifying condition, the \textit{parallel trends assumption}. This assumption posits that, had the treatment not occurred, the outcomes for the treated and control groups would have followed similar (parallel) trajectories. This core assumption allows researchers to use the observed change in the control group's outcome to construct the counterfactual path of the treated group, thereby isolating the treatment's causal impact by differencing out confounding factors that are constant over time.

While the classic two-group, two-period setup provides a clear theoretical foundation, empirical applications are rarely so simple. The last decade has seen a critical reexamination of how DiD methods are applied in more complex settings -- see \citet{callaway2023difference, de2023credible, roth2023s} for a review. A major focus has been on scenarios with multiple time periods and variation in when different units receive treatment, known as \enquote{staggered adoption} \citep{callaway2021difference, de2020two, goodman2021difference, sun2021estimating}. A second parallel line of inquiry has developed methods to handle missingness in post-treatment outcomes, often under stringent assumptions. For example, \citet{rathnayake2024difference, shin2024difference, viviens2025difference} provide estimation strategies for the ATT in an unconditional framework where available covariates are disregarded; while \citet{bellego2025chained}, working in a staggered adoption setting, assumes that outcomes are observed at least twice for each unit. However, despite the progress made, none of these papers discusses efficiency in either nonparametric or semiparametric models in two-group, two-period DiD setups with available covariates. A further persistent challenge, commonly encountered in practice, is how to conduct DiD analysis when \textit{pre-treatment} outcome data are missing for a subset of the sample. This situation is common in many real scenarios: in labor economics, individuals entering a job training program may lack pre-treatment earnings records; in education policy, students may transfer into a school district without their prior test scores; and in health research, electronic health records may not contain a baseline measurement for patients who did not have a clinical visit during the pre-period. 
When the pre-treatment outcome is unavailable for some units, the standard DiD estimator cannot be computed. A simple \enquote{complete-case} analysis that discards observations with missing data can be deeply flawed, as it may introduce significant selection bias. In fact, if the mechanism that causes the data to be missing is related to treatment or other characteristics that influence the outcome, the remaining sample is no longer representative of the population of interest and the resulting estimates will be biased.

This paper addresses this critical gap by bridging modern DiD estimation with semiparametric statistical theory \citep{bickel1993efficient, kennedy2024semiparametric, tsiatis2006semiparametric}. We provide a formal framework for identifying and efficiently estimating the ATT when pre-treatment outcomes are \textit{missing at random} (MAR). In the Appendix, we also extend our results to the symmetric scenario where all pre-treatment outcomes are observed, but post-treatment outcomes can be missing at random. 
Finally, again in the Appendix, we also further extend our framework to accommodate the scenario in which outcomes can be missing both before and after treatment.

\subsection{Problem setup}
We assume a two-group, two-period DiD setup\footnote{The framework can be extended to staggered adoption settings as in \citet{callaway2021difference}, but we work in a two-group, two-period setting for ease of exposition.}. In particular, we assume to have access to a collection of $n$ i.i.d.~observed-data samples $\data_i = (X_i, R_{i0}, R_{i0} Y_{i0}, A_i, Y_{i1})\sim\jPtarget$, $i=1,\dots,n$, where $X_i\in\RR^p$ is a $p$-dimensional vector of pre-treatment covariates; $R_{i0}\in\{0,1\}$ is a binary variable indicating whether the baseline, pre-treatment outcome $Y_{i0}$ is observed ($R_{i0}=1$) or not ($R_{i0}=0$); $A_i\in\{0,1\}$ is a binary variable indicating whether observation $i$ has received a treatment ($A_i=1$) or was in the control group ($A_i=0$); finally $Y_{i1} = A_i Y_{i1}^{(1)} + (1-A_i) Y_{i1}^{(0)}$ is the observed outcome after treatment, where $Y_{i1}^{(1)}$ and $Y_{i1}^{(0)}$ denote the potential outcomes for observation $i$ under treatment and control, respectively. We let $\data=(X,R_0,R_0Y_0,A,Y_1)$, with $Y_1= A Y_1^{(1)} + (1-A)Y_1^{(0)}$, denote an independent copy of $\data_i$. From a theoretical viewpoint, it is also useful to construct a prototypical full-data sample as $\data^F = (X,Y_0,Y_1^{(0)}, Y_1^{(1)})$ -- see \citet{tsiatis2006semiparametric}.

\begin{remark}
    We want to emphasize that our framework differs from both the balanced panel data and the repeated cross-section data analyzed by \citet{sant2020doubly}. The former postulates that each sample is observed both before and after treatment: the latter assumes that each sample is observed either before or after treatment. Instead, our setup mirrors the so-called \textit{unbalanced}, or partially missing, panel data framework, where the outcomes of some samples are observed both before and after the treatment, while some other samples miss pre-treatment outcomes.
\end{remark}

Our goal is the estimation of the \textit{average treatment effect on the treated} (ATT), defined as:
\begin{equation}
\label{eq:att}
\begin{split}
    \thetatarget = \ATT &= \EE{Y^{(1)}_1 - Y^{(0)}_1 \mid A=1} \\
    & = \EE{Y^{(1)}_1 - Y_0 \mid A=1} - \EE{Y^{(0)}_1 - Y_0 \mid A=1}\,.
\end{split}
\end{equation}
The ATT is a function of the full data $\data^F$, and as such is not directly identifiable using only observed data $\data$. In the next Section, we provide a minimal set of assumptions that make the target $\thetatarget$ a function of the observed data $\data$.

\section{Identification and efficiency bounds}
We provide a first set of assumptions  for identifiability.
\begin{assumption}[Identifiability]
\label{ass:identify}
Let the following identifiability assumptions hold:
\begin{enumerate}[label=\textbf{\alph*.}]
    \item \textbf{Conditional parallel trends.} $Y^{(0)}_1 - Y_0 \indep A \mid X$.
    \item \textbf{Consistency.} $Y_1= A Y_1^{(1)} + (1-A)Y_1^{(0)}$.
    \item \textbf{Positivity.} $\pitarget(x) = \PP{A=1\mid X=x}\in(0,1)$ almost surely for every $x\in\RR^p$.
\end{enumerate}
\end{assumption}

\begin{remark}
    {\color{black} Assumption \textbf{a} is stronger than the conditional mean assumption standardly employed in the econometrics literature on DiD usually expressed as:
    \begin{equation}
        \EE{Y^{(0)}_1 - Y_0 \mid X, A=1} = \EE{Y^{(0)}_1 - Y_0 \mid X, A=0}\,.
    \end{equation}
    We use the stronger conditional independence notation in Assumption \textbf{a} for \enquote{notational consistency} because it aligns with the standard conditional independence language used in semiparametric theory and in our subsequent MAR assumptions.} Importantly, Assumption \textbf{a} does not imply conditional independence of either $Y^{(0)}_1$ or $Y_0$, but only of their incremental difference. Assumptions \textbf{b} and \textbf{c} are standard in causal inference. In particular, Assumption \textbf{b} guarantees the observed outcome at time $t=1$ corresponds to the potential outcome associated with the treatment status. Assumption \textbf{c} forces the probability of being treated, for each observation, to be strictly positive. 
\end{remark}
Then, we also provide two different sets of assumptions for the missingness mechanism in the pre-treatment outcome.
\begin{assumption}[Outcome independent missing at random]
\label{ass:mar_simple}
Let the following MAR assumptions hold:
\begin{enumerate}[label=\textbf{\alph*.}]
    \item \textbf{No unmeasured confounding.} $Y_0 \indep R_0 \mid X,A$.
    \item \textbf{Weak overlap.} $\gammatarget(x,a) = \PP{R_0 = 1 \mid X=x, A=a} \in (0,1)$ almost surely for every $x\in\RR^p$ and $a\in\{0,1\}$.
\end{enumerate}
\end{assumption}

\begin{assumption}[Outcome dependent missing at random]
\label{ass:mar_hard}
Let the following MAR assumptions hold:
\begin{enumerate}[label=\textbf{\alph*.}]
    \item \textbf{No unmeasured confounding.} $Y_0 \indep R_0 \mid X,Y_1,A$.
    \item \textbf{Weak overlap.} $\gammatarget(x,y_1,a) = \PP{R_0 = 1 \mid X=x, Y_1=y_1, A=a} \in (0,1)$ almost surely for every $x\in\RR^p$, $y_1\in\RR$, and $a\in\{0,1\}$.
\end{enumerate}
\end{assumption}

\begin{example}[Medical records motivation]
To illustrate the practical distinction between these assumptions, consider a longitudinal study on the progression of a chronic disease, where $Y_0$ is the baseline health status and $Y_1$ is the post-treatment status. Under Assumption~\ref{ass:mar_simple}, missingness in baseline records ($R_0$) is driven solely by covariates ($X$) or treatment assignment ($A$). For instance, missingness might arise because a specific hospital system ($X$) used paper records that were not digitized, regardless of how sick the patients currently are. Under Assumption~\ref{ass:mar_hard}, missingness may be driven by the future outcome. For example, in retrospective chart reviews, the availability of baseline data ($R_0$) often depends on the patient's current condition ($Y_1$). Clinicians may be required to diligently hunt down and record historical data ($Y_0$) only for patients who currently exhibit severe complications ($Y_1$). Conversely, for patients who have recovered (low $Y_1$), the baseline charts may never be retrieved or digitized ($R_0=0$). In this setting, conditioning on $Y_1$ is necessary to block the dependence between missingness and the unobserved baseline outcome. See Figure~\ref{fig:dags} for a graphical representation using DAGs of the causal relationships under the two different assumptions.  
\end{example}

\begin{figure}
    \centering
    \includegraphics[width=0.6\linewidth]{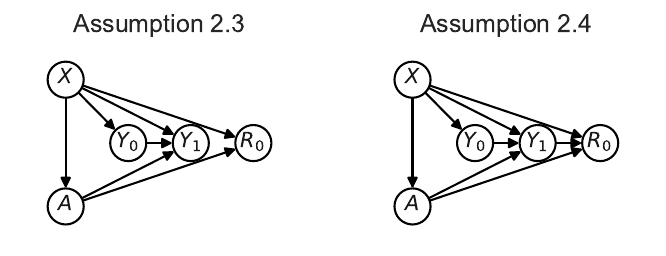}
    \caption{Example DAGs showing some of the causal dependencies allowed under Assumptions~\ref{ass:mar_simple} and~\ref{ass:mar_hard}.}
    \label{fig:dags}
\end{figure}

\begin{remark}
    These assumptions are novel to our setting and are equivalent to a \textit{missing at random} (MAR) missingness design. Notice that we let the missingness pattern depend on both covariates $X$ \textit{and} the treatment $A$, and eventually on the post-treatment outcome. In other words, we admit the possibility that pre-treatment outcomes can be missing due to covariates, the treatment that will be administered, and the post-treatment outcome values.
\end{remark}

Equipped with the previous sets of assumptions, we can now identify the ATT as a function of the observed data $\data$ as shown in the following Lemma.
\begin{lemma}[Identification of ATT]
\label{lemma:identification}
Under Assumption~\ref{ass:identify} and Assumption~\ref{ass:mar_simple}, the ATT can be identified as a function of the observed data:
\begin{equation}
    \label{eq:mar_simple_id}
    \begin{split}
        \thetatarget &= \EE{\frac{A}{\EE{A}} \left( Y_1 - \EE{Y_0\mid X,A=1, R_0=1}\right)} \\
        &- \EE{\frac{A}{\EE{A}} \left(\EE{Y_1\mid X, A=0} - \EE{Y_0\mid X, A=0, R_0=1} \right)}\,.
    \end{split}
\end{equation}
Under Assumption~\ref{ass:identify} and Assumption~\ref{ass:mar_hard}, the ATT can be identified as a function of the observed data:
\begin{equation}
    \label{eq:mar_hard_id}
    \begin{split}
        \thetatarget &= \EE{\frac{A}{\EE{A}} \left( Y_1 - \EE{Y_0\mid X,Y_1,A=1, R_0=1}\right)} \\
        &- \EE{\frac{A}{\EE{A}} \left(\EE{Y_1\mid X, A=0} - \EE{\EE{Y_0\mid X,Y_1, A=0, R_0=1}\mid X, A=0} \right)}\,.  
    \end{split}
\end{equation}
\end{lemma}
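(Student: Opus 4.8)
The plan is to start from the two-line decomposition of $\thetatarget$ already recorded in \eqref{eq:att}, namely $\thetatarget=\EE{Y_1^{(1)}-Y_0\mid A=1}-\EE{Y_1^{(0)}-Y_0\mid A=1}$, and rewrite each conditional expectation as a functional of the observed law $\jPtarget$ by invoking one assumption at a time. For the first term, consistency (Assumption~\ref{ass:identify}\textbf{b}) gives $\EE{Y_1^{(1)}\mid A=1}=\EE{Y_1\mid A=1}$. For the second term, conditional parallel trends (Assumption~\ref{ass:identify}\textbf{a}) gives $\EE{Y_1^{(0)}-Y_0\mid X,A=1}=\EE{Y_1^{(0)}-Y_0\mid X,A=0}$, and applying consistency on the event $\{A=0\}$ turns the right-hand side into $\EE{Y_1-Y_0\mid X,A=0}$; the tower property then yields $\EE{Y_1^{(0)}-Y_0\mid A=1}=\EE{\EE{Y_1-Y_0\mid X,A=0}\mid A=1}$. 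Both conditional expectations are well defined because positivity (Assumption~\ref{ass:identify}\textbf{c}) makes $\{A=1,X=x\}$ and $\{A=0,X=x\}$ non-null. Throughout I will use the elementary reweighting identity $\EE{g(\data)\mid A=1}=\EE{\frac{A}{\EE{A}}g(\data)}$, valid since $\EE{A}=\PP{A=1}>0$, which is exactly what produces the $A/\EE{A}$ factors appearing in \eqref{eq:mar_simple_id}--\eqref{eq:mar_hard_id}.

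It then remains to express the terms involving the partially observed $Y_0$ using only $(X,R_0,R_0Y_0,A,Y_1)$. Under Assumption~\ref{ass:mar_simple}, the key step is that $Y_0\indep R_0\mid X,A$ gives $\EE{Y_0\mid X,A=a}=\EE{Y_0\mid X,A=a,R_0=1}$ for $a\in\{0,1\}$, where weak overlap (Assumption~\ref{ass:mar_simple}\textbf{b}) guarantees that the conditioning event $\{R_0=1,A=a,X=x\}$ is non-null so the right-hand side is meaningful. Substituting this for $\EE{Y_0\mid X,A=1}$ in the first term and for $\EE{Y_0\mid X,A=0}$ in the control term, and collecting the pieces via the reweighting identity, gives exactly \eqref{eq:mar_simple_id}.

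Under Assumption~\ref{ass:mar_hard} the same scheme applies but with one additional layer of conditioning on $Y_1$, since the missingness is now allowed to depend on the post-treatment outcome: $Y_0\indep R_0\mid X,Y_1,A$ gives $\EE{Y_0\mid X,Y_1,A=a}=\EE{Y_0\mid X,Y_1,A=a,R_0=1}$, again well defined by the overlap condition in Assumption~\ref{ass:mar_hard}\textbf{b}. For the treated term I note that $A/\EE{A}$ is $\sigma(X,Y_1,A)$-measurable, so $\EE{\frac{A}{\EE{A}}Y_0}=\EE{\frac{A}{\EE{A}}\EE{Y_0\mid X,Y_1,A}}=\EE{\frac{A}{\EE{A}}\EE{Y_0\mid X,Y_1,A=1,R_0=1}}$, the last equality using that the $A$ factor restricts to $\{A=1\}$; for the control term I first average $Y_1$ out within $\{A=0\}$, i.e. $\EE{Y_0\mid X,A=0}=\EE{\EE{Y_0\mid X,Y_1,A=0,R_0=1}\mid X,A=0}$ by the tower property, which is what produces the inner conditional expectation in \eqref{eq:mar_hard_id}.

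The bookkeeping is routine; the only place that needs genuine care is checking that conditioning on $\{R_0=1\}$ is legitimate everywhere it appears — this is precisely where the weak overlap conditions and positivity are used — and, in the outcome-dependent case, keeping straight which conditional expectations are taken before versus after integrating out $Y_1$. One subtlety I will flag explicitly: in the treated term it is harmless to condition on the observed $Y_1$ rather than on the counterfactual $Y_1^{(1)}$, because the $A$ factor restricts attention to $\{A=1\}$, on which consistency equates them, so the $Y_1$ inside $\EE{Y_0\mid X,Y_1,A=1,R_0=1}$ is unambiguous.
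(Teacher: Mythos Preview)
Your proposal is correct and follows essentially the same route as the paper's own proof: start from the decomposition in \eqref{eq:att}, apply consistency and conditional parallel trends via the tower property, insert $R_0=1$ into the conditioning on $Y_0$ using the relevant MAR assumption, and finish with the reweighting identity $\EE{g(\data)\mid A=1}=\EE{\tfrac{A}{\EE{A}}g(\data)}$. If anything, your write-up is more explicit than the paper's about why positivity and weak overlap are needed and about the $Y_1$ versus $Y_1^{(1)}$ subtlety on $\{A=1\}$ in the outcome-dependent case.
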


Once identified, the next natural question is how well the ATT can be estimated under each set of assumptions. Before proceeding, we denote the \textit{regression functions} as $\mutarget_t(x,a) = \EE{Y_t \mid X=x, A=a}$ for $t\in\{0,1\}$, and $\mutarget_0(x,y_1,a) = \EE{Y_0 \mid X=x, Y_1=y_1, A=a}$. Furthermore, denote the \textit{nested regression function} as $\etatarget_t(x,a) = \EE{\mutarget_t(x,Y_1,a)\mid X=x,A=a}$. Notice that we cannot use the law of iterated expectations to simplify $\etatarget_t(x,a)$ into $\EE{Y_0\mid X=x,A=a}$. The inner conditional expectation $\mutarget_t(x,Y_1,a)$ is in fact implicitly conditioning on $R_0=1$, while the external conditional expectation is not. With this in mind, we suppress dependence of $\mutarget_t(x,Y_1,a)$ on $R_0=1$ for notational simplicity. The following Proposition provides the semiparametric efficiency bounds. 

\begin{proposition}[Semiparametric efficiency bounds]
    \label{prop:eff_bound}
    Under Assumption~\ref{ass:identify} and Assumption~\ref{ass:mar_simple}, the efficient observed-data influence function is given by:
    \begin{equation}
    \label{eq:eif_simple}
    \begin{split}
        \influence{\data} &= \frac{A}{\EE{A}}\left(Y_1 - \left(\mutarget_0(X,1) + \frac{R_0}{\gammatarget(X,1)}\left(Y_0 - \mutarget_0(X,1)\right) \right) - \mutarget_1(X,0) + \mutarget_0(X,0)\right) \\
        &- \frac{(1-A)\pitarget(X)}{(1-\pitarget(X))\EE{A}} \left(Y_1 - \frac{R_0}{\gammatarget(X,0)}\left(Y_0 - \mutarget_0(X,0)\right) - \mutarget_1(X,0) \right) - \frac{A}{\EE{A}}\thetatarget\,.
    \end{split}
\end{equation}
    Under Assumption~\ref{ass:identify} and Assumption~\ref{ass:mar_hard}, the efficient observed-data influence function is given by:
    \begin{equation}
    \label{eq:eif_hard}
    \begin{split}
        \influence{\data} &= \frac{A}{\EE{A}}\left(Y_1 - \left(\mutarget_0(X,Y_1,1) + \frac{R_0}{\gammatarget(X,Y_1,1)}\left(Y_0 - \mutarget_0(X,Y_1,1)\right) \right) - \mutarget_1(X,0) + \etatarget_0(X,0)\right) \\
        &- \frac{(1-A)\pitarget(X)}{(1-\pitarget(X))\EE{A}} \left(Y_1 - \left(\mutarget_0(X,Y_1,0)  + \frac{R_0}{\gammatarget(X,Y_1,0)}\left(Y_0 - \mutarget_0(X,Y_1,0)\right) \right) - \mutarget_1(X,0) + \etatarget_0(X,0) \right) \\
        &- \frac{A}{\EE{A}}\thetatarget\,.
    \end{split}
\end{equation}
The semiparametric efficiency bound under each assumption is given by $\VV{\influence{\data}}=\EE{\sqinfluence{\data}}$.
\end{proposition}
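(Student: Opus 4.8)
\emph{Proof plan.} The plan is to apply the standard machinery of semiparametric efficiency theory \citep{bickel1993efficient, tsiatis2006semiparametric, kennedy2024semiparametric}: factorize the observed-data likelihood, read off the orthogonal decomposition of the tangent space, compute a gradient of $\thetatarget$ by pathwise differentiation, and identify the canonical gradient --- the unique gradient lying in the tangent space --- as the efficient influence function. Concretely, writing the law of $\data=(X,R_0,R_0Y_0,A,Y_1)$ as
\[
  p(\data)=p(X)\,p(A\mid X)\,p(Y_1\mid X,A)\,p(R_0\mid X,A,Y_1)\,\bigl[\,p(Y_0\mid X,A,Y_1,R_0=1)\,\bigr]^{R_0},
\]
with the fourth factor restricted to $p(R_0\mid X,A)$ under Assumption~\ref{ass:mar_simple}, the observed-data tangent space decomposes orthogonally as $\mathcal{T}=\mathcal{T}_{X}\oplus\mathcal{T}_{A\mid X}\oplus\mathcal{T}_{Y_1\mid X,A}\oplus\mathcal{T}_{R_0}\oplus\mathcal{T}_{Y_0\mid X,A,Y_1,R_0=1}$, each factor being otherwise unrestricted. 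Once $\influence{\data}$ is exhibited as this canonical gradient, the bound $\VV{\influence{\data}}=\EE{\sqinfluence{\data}}$ is immediate from $\EE{\influence{\data}}=0$.

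I would build the canonical gradient in two stages. \emph{Stage 1 (full data):} pretending $Y_0$ were always observed (set $R_0\equiv1$), the model for $(X,A,Y_0,Y_1)$ is nonparametric --- conditional parallel trends constrains only the unobserved potential-outcome law --- so the functional $\thetatarget=\EE{\tfrac{A}{\EE{A}}\bigl((Y_1-Y_0)-(\mutarget_1(X,0)-\mutarget_0(X,0))\bigr)}$ (equation~\eqref{eq:mar_simple_id} with $R_0\equiv1$) has a unique influence function, which pathwise differentiation identifies as the panel doubly robust DiD influence function of \citet{sant2020doubly},
\[
  \finfluence{X,A,Y_0,Y_1}=\frac{A-\pitarget(X)}{\EE{A}\,(1-\pitarget(X))}\bigl(Y_1-Y_0-\mutarget_1(X,0)+\mutarget_0(X,0)\bigr)-\frac{A}{\EE{A}}\thetatarget,
\]
and one checks that \eqref{eq:eif_simple} collapses to this when $R_0\equiv1$, $\gammatarget\equiv1$. \emph{Stage 2 (coarsening):} since $Y_0$ is seen only on $\{R_0=1\}$ and is MAR, I would invoke the missing-data calculus. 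Under Assumption~\ref{ass:mar_simple} the mechanism is restricted to $(X,A)$, and it suffices to replace each unavailable $Y_0$ in $\finfluence{\cdot}$ by the augmented inverse-probability surrogate $\mutarget_0(X,A)+\tfrac{R_0}{\gammatarget(X,A)}(Y_0-\mutarget_0(X,A))$ with $\mutarget_0(X,A)=\EE{Y_0\mid X,A,R_0=1}$, which is already orthogonal to $\mathcal{T}_{R_0}$ and $\mathcal{T}_{Y_0\mid X,A,Y_1,R_0=1}$, yielding \eqref{eq:eif_simple}. Under Assumption~\ref{ass:mar_hard} the mechanism may depend on $Y_1$, which enlarges $\mathcal{T}_{R_0}$; the surrogate must then be centered at $\mutarget_0(X,Y_1,A)=\EE{Y_0\mid X,Y_1,A,R_0=1}$, and the relevant conditional mean of $Y_0$ given $(X,A)$ becomes the nested regression $\etatarget_0(X,A)$ rather than a complete-case regression --- this is why the inner iterated expectation cannot be collapsed. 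Enforcing orthogonality to $\mathcal{T}_{Y_1\mid X,A}$ and to the enlarged $\mathcal{T}_{R_0}$ then introduces the $\etatarget_0(X,0)$ terms of \eqref{eq:eif_hard}.

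Two things remain to be verified for each candidate $\influence{\data}$. First, that it is a gradient: $\tfrac{d}{dt}\thetatarget(P_t)\big|_{t=0}=\EE{\influence{\data}\,S}$ along every regular submodel $\{P_t\}$ through $\jPtarget$ with score $S$ --- a sequence of routine computations using iterated expectations, consistency, positivity, and the relevant no-unmeasured-confounding condition (the inner products against $R_0$- and $Y_0$-scores vanish by MAR). Second, that it lies in $\mathcal{T}$, which I would establish by displaying it as an explicit sum of one element from each of the five subspaces; the substantive inputs are the conditional-moment identities $\EE{Y_0-\mutarget_0(X,A)\mid X,A,R_0=1}=0$ under Assumption~\ref{ass:mar_simple}, and $\EE{Y_0-\mutarget_0(X,Y_1,A)\mid X,Y_1,A,R_0=1}=0$ together with $\EE{\mutarget_0(X,Y_1,A)-\etatarget_0(X,A)\mid X,A}=0$ under Assumption~\ref{ass:mar_hard}. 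I expect the main obstacle to be Stage~2 under Assumption~\ref{ass:mar_hard}: getting the projection onto $\mathcal{T}_{Y_1\mid X,A}$ right when the missingness mechanism itself leans on $Y_1$, so that inverse-weighting by $\gammatarget(X,Y_1,A)$ restores unbiasedness only after conditioning on $Y_1$ and orthogonality forces the additional centering by $\etatarget_0$; confirming that this term does not spoil orthogonality to the other four subspaces is where the genuine work lies, whereas mean-zero, the pathwise-derivative identities, and reading the bound off as $\EE{\sqinfluence{\data}}$ are mechanical.
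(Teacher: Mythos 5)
Your plan is correct in substance and would arrive at the stated influence functions, but it takes a genuinely different route from the paper. You follow the classical tangent-space program of \citet{bickel1993efficient} and \citet{tsiatis2006semiparametric}: factorize the observed-data likelihood, decompose the tangent space orthogonally, derive the full-data (Sant'Anna--Zhao) influence function, pass to the coarsened data by AIPW augmentation, and verify the pathwise-derivative identity. The paper instead short-circuits the derivation entirely: because the observed-data model is nonparametric (the MAR restrictions only constrain unidentified components of the law), the influence function is unique, so it suffices to posit the candidate $\influence{\data}$ and check that the von Mises remainder $\kappa_2(\hat{\mathbbmss{P}},\jPtarget)=\theta(\hat{\mathbbmss{P}})-\theta(\jPtarget)+\EE{\influence{\data;\theta(\hat{\mathbbmss{P}})}}$ consists only of products of nuisance errors. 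That verification is shorter and simultaneously yields the multiple-robustness structure and the bias terms reused for Theorem~\ref{th:an}; your construction is more illuminating about \emph{why} the candidate has its form --- in particular why $\etatarget_0(X,0)$ must replace $\mutarget_0(X,0)$ under Assumption~\ref{ass:mar_hard} --- at the cost of more bookkeeping. Two points to tighten if you execute the plan: (i) Assumption~\ref{ass:mar_simple} restricts $p(r_0\mid x,a,y_0)$, not $p(r_0\mid x,a,y_1)$, so it does not force your fourth likelihood factor to be free of $Y_1$; this is harmless only because the observed-data model is saturated either way, which also makes your second verification step (membership in the tangent space) automatic rather than substantive. (ii) Your claim that the AIPW surrogate is \enquote{already orthogonal} to the $R_0$- and $Y_0$-scores is loose under Assumption~\ref{ass:mar_simple}, where one only has $\EE{Y_0-\mutarget_0(X,A)\mid X,A,R_0=1}=0$ and not its $Y_1$-conditional analogue; uniqueness of the gradient in the saturated model is what actually closes the argument once the pathwise-derivative identity is checked.
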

\begin{remark}
    When the pre-treatment outcome is always observed, we recover the results by \citet{sant2020doubly} (Proposition 1) and \citet{hahn1998role} (Theorem 1). In fact, the efficient influence function in this case simplifies to
    \begin{equation}
    \begin{split}
        \finfluence{\data} &= \frac{A}{\EE{A}}\left(Y_1 - Y_0 - \mutarget_1(X,0) + \mutarget_0(X,0)\right) \\
        &- \frac{(1-A)\pitarget(X)}{(1-\pitarget(X))\EE{A}} \left(Y_1 - Y_0 - \mutarget_1(X,0) +  \mutarget_0(X,0) \right) - \frac{A}{\EE{A}}\thetatarget\,.
    \end{split}
    \end{equation}
    Notice, however, that by allowing for missingness, our approach provides a better description of the loss in efficiency incurred when pre-treatment outcomes are missing at random. In particular, the efficiency loss incurred under Assumption~\ref{ass:mar_simple} is:
    \begin{equation}
    \begin{split}
        \VV{\influence{\data}} - \VV{\finfluence{\data}}  &= \VV{\influence{\data} - \finfluence{\data}} \\
        &= \EE{\frac{\pitarget(X)\VV{Y_0\mid X,A=1}}{\EE{A}^2} \frac{1-\gammatarget(X,1)}{\gammatarget(X,1)}} \\
        &+ 
        \EE{\frac{\pitarget(X)^2 \VV{Y_0\mid X,A=0}}{(1-\pitarget(X))\EE{A}^2}  \frac{1-\gammatarget(X,0)}{\gammatarget(X,0)} }\,,
    \end{split}
    \end{equation}
    where we are exploiting the fact that the influence function without missingness is a projection of the observed-data influence function, and as such the variance of the observed-data influence functions decomposes nicely due to the Pythagorean theorem \citep{tsiatis2006semiparametric}.
    Similarly, the efficiency loss incurred under Assumption~\ref{ass:mar_hard} is:
    \begin{equation}
    \begin{split}
        \VV{\influence{\data}} - \VV{\finfluence{\data}} &= \EE{\frac{\pitarget(X) \VV{Y_0\mid X,Y_1,A=1}}{\EE{A}^2} \frac{1-\gammatarget(X,Y_1,1)}{\gammatarget(X,Y_1,1)}} \\
        &+ \EE{\frac{\pitarget(X)^2 \VV{Y_0\mid X,Y_1,A=0}}{(1-\pitarget(X))\EE{A}^2} \frac{1-\gammatarget(X,Y_1,0)}{\gammatarget(X,Y_1,0)}} \\
        &+ \VV{ \left(\frac{A}{\EE{A}} - \frac{(1-A)\pitarget(X)}{(1-\pitarget(X))\EE{A}}\right)(\etatarget_0(X,0) - \mutarget_0(X,0)) }\,.
    \end{split}
    \end{equation}
\end{remark}

We now turn to the construction of estimators that can asymptotically match the semiparametric efficiency bound derived above.

\section{Estimation and inference}
The nuisance functions $\mutarget$, $\pitarget$, $\gammatarget$, and $\etatarget$ are unknown, and must be estimated from the data at hand. We employ \textit{cross-fitting} to avoid restrictive Donsker conditions and to retain full-sample efficiency \citep{bickel1988estimating, chernozhukov2018double, robins2008higher, schick1986asymptotically,zheng2010asymptotic}. Cross-fitting works as follows. We first randomly split the observations $\{\data_1,\dots,\data_{n}\}$ into $J$ disjoint folds (without loss of generality, we assume that the number of observations $n$ is divisible by $J$). For each $j=1,\ldots, J$ we form $\hat{\mathbbmss{P}}^{[-j]}$ with all but the $j$-th fold, and $\mathbbmss{P}_{n}^{[j]}$ with the $j$-th fold.  Then, we learn $\hat{\mu}^{[-j]}$, $\hat{\pi}^{[-j]}$, $\hat{\gamma}^{[-j]}$, and  $\hat{\eta}^{[-j]}$ on $\hat{\mathbbmss{P}}^{[-j]}$, and compute the final estimator $\hat{\theta}$ by solving the estimating equation
\begin{equation}
\label{eq:est_eq}
    \sum_{j=1}^J \sum_{i\in\mathbbmss{P}_{n}^{[j]}} \influence{\data_i;\thetatarget;\hat{\mu}^{[-j]};\hat{\pi}^{[-j]};\hat{\gamma}^{[-j]}; \hat{\eta}^{[-j]}} = 0\,.
\end{equation} 
For simplicity, we assume that $\hat{\eta}^{[-j]}$ is estimated on an independent subsample of $\hat{\mathbbmss{P}}^{[-j]}$, that is, it is independent from $\hat{\mu}^{[-j]}$, $\hat{\pi}^{[-j]}$, $\hat{\gamma}^{[-j]}$. The solution to the previous estimating equation can be more conveniently expressed as:
\begin{equation}
    \hat\theta = \frac{1}{J} \sum_{j=1}^J \hat{\theta}^{[j]}\,,
\end{equation}
where the form of $\hat{\theta}^{[j]}$ depends on the missing at random assumption. Under Assumption~\ref{ass:identify} and Assumption~\ref{ass:mar_simple}, it is equal to:
\begin{equation}
\label{eq:est_simple}
\begin{split}
    \hat{\theta}^{[j]} &= \sum_{i\in\mathbbmss{P}_{n}^{[j]}} \frac{A_i}{\sum_{i\in\mathbbmss{P}_{n}^{[j]}} A_i}\left(Y_{i1} - \left(\hat\mu^{[-j]}_0(X_i,1) + \frac{R_{i0}}{\hat\gamma^{[-j]}(X_i,1)}\left(Y_{i0} - \hat\mu^{[-j]}_0(X_i,1)\right) \right) - \hat\mu^{[-j]}_1(X_i,0) + \hat\mu^{[-j]}_0(X_i,0)\right) \\
    &- \sum_{i\in\mathbbmss{P}_{n}^{[j]}} \frac{(1-A_i)\hat\pi^{[-j]}(X_i)}{(1-\hat\pi^{[-j]}(X_i)) \sum_{i\in\mathbbmss{P}_{n}^{[j]}} A_i} \left(Y_{i1} - \hat\mu^{[-j]}_1(X_i,0) - \frac{R_{i0}}{\hat\gamma^{[-j]}(X_i,0)}\left(Y_{i0} - \hat\mu^{[-j]}_0(X_i,0)\right) \right)\,,
\end{split}
\end{equation}
while under Assumption~\ref{ass:identify} and Assumption~\ref{ass:mar_hard}, it is equal to:
\begin{equation}
\label{eq:est_hard}
\begin{split}
    \hat{\theta}^{[j]} &= \sum_{i\in\mathbbmss{P}_{n}^{[j]}} \frac{A_i}{\sum_{i\in\mathbbmss{P}_{n}^{[j]}} A_i}\left(Y_{i1} - \left(\hat\mu^{[-j]}_0(X_i,Y_{i1},1) + \frac{R_{i0}}{\hat\gamma^{[-j]}(X_i,Y_{i1},1)}\left(Y_{i0} - \hat\mu^{[-j]}_0(X_i,Y_{i1},1)\right) \right)\right) \\
    &-\sum_{i\in\mathbbmss{P}_{n}^{[j]}} \frac{A_i}{\sum_{i\in\mathbbmss{P}_{n}^{[j]}} A_i} \left(\hat\mu^{[-j]}_1(X_i,0) - \hat\eta^{[-j]}_0(X_i,0) \right)\\
    &- \sum_{i\in\mathbbmss{P}_{n}^{[j]}} \frac{(1-A_i)\hat\pi^{[-j]}(X_i)}{(1-\hat\pi^{[-j]}(X_i)) \sum_{i\in\mathbbmss{P}_{n}^{[j]}} A_i} \left(Y_{i1} - \hat\mu^{[-j]}_1(X_i,0) - \hat\mu^{[-j]}_0(X_i,Y_{i1},0) + \hat\eta^{[-j]}_0(X_i,0) \right) \\
    &+ \sum_{i\in\mathbbmss{P}_{n}^{[j]}} \frac{(1-A_i)\hat\pi^{[-j]}(X_i)}{(1-\hat\pi^{[-j]}(X_i)) \sum_{i\in\mathbbmss{P}_{n}^{[j]}} A_i} \left( \frac{R_{i0}}{\hat\gamma^{[-j]}(X_i,Y_{i1},0)}\left(Y_{i0} - \hat\mu^{[-j]}_0(X_i,Y_{i1},0)\right) \right)\,.\\
\end{split}
\end{equation}

\begin{remark}[Multiple robustness]
    The structure of the estimators in Eq.~\ref{eq:est_simple} and Eq.~\ref{eq:est_hard} sheds light on their \textit{multiple robustness} property. Under Assumption~\ref{ass:mar_simple}, we need either the model for $\mutarget$ or both the models for $\pitarget$ and $\gammatarget$ to be well-specified in order to achieve consistency. Under Assumption~\ref{ass:mar_hard}, we also need the nested regression to be consistent if the propensity score is misspecified. See Appendix Table~\ref{tab:multiplerobustness} for a description of the possible combinations of nuisance functions required for consistency.
\end{remark}

\subsection{Estimating the nested regression}
\label{sec:nested}

The nested regression function $\etatarget_0(x,0) = \EE{\mutarget_0(x,Y_1,0)\mid X=x,A=0}$ that appears in our estimator under Assumption~\ref{ass:mar_hard} is of particular interest. Here, we showcase a regression-based approach to estimate it, and we defer to Appendix Section \ref{sec:density} a second approach based on conditional densities. For simplicity, we develop the arguments in this Section by assuming that the dataset is splitted in two folds, the former, $\hat{\mathbbmss{P}}$, being employed for nuisance training and the latter, $\mathbbmss{P}_n$, for influence function averaging. 

In the regression-based approach, we treat $\mutarget_0(x,Y_1,0)$ as the response variable with $X$ and as predictor, and we fit a model to learn $\etatarget_0(x,0)$. Of course, $\mutarget_0(x,Y_1,0)$ is not known and thus has to be estimated on $\hat{\mathbbmss{P}}$. Interestingly, this nested approach shares many commonalities with the DR-Learner, a method commonly used to estimate heterogeneous treatment effects \citep{foster2023orthogonal,kennedy2023towards}, and counterfactual regression \citep{yang2023forster}. First, we introduce some additional notation. Let $\hat{\eta}_0(x,0) = \hatEE{\hat\mu_0(x,Y_1,0)\mid X=x,A=0}$ be the regression of $\hat\mu_0(x,Y_1,0)$ on the covariates in the averaging sample $\mathbbmss{P}_n$, and $\Tilde{\eta}_0(x,0) = \hatEE{\mutarget_0(x,Y_1,0)\mid X=x,A=0}$ be the corresponding oracle estimator regressing the true $\mutarget_0(x,Y_1,0)$ onto the covariates. Finally, denote the oracle risk $\Tilde{\Delta}_n^2(x) = \EE{(\Tilde{\eta}_0(x,0) - \etatarget_0(x,0))^2}$ and the conditional bias as $\hat b(x,y_1,0) = \EE{\hat\mu_0(X,Y_1,0) -\mutarget_0(X,Y_1,0)\mid \hat{\mathbbmss{P}}, X=x, Y_1=y_1}$. We can now introduce the definition of \textit{stable estimator}.
\begin{definition}[Stability of estimator]
    The regression estimator $\hatEE{\cdot\mid X=x,A=0}$ is defined as stable at $X = x$ and $A=0$ (with respect to a distance metric $d$) if
    \begin{equation}
        \frac{\hat\eta_0(X,0) - \Tilde{\eta}_0(X,0) - \hatEE{\hat b(X,Y_1,0) \mid X, A=0}}{\Tilde{\Delta}_n(X)} \pto 0\,,
    \end{equation}
    as $d(\hat\mu_0(X,Y_1,0), \mutarget_0(X,Y_1,0))\pto 0$.
\end{definition}
\begin{remark}
    This pointwise (or local) definition, introduced by \citet{kennedy2023towards} and extended to $\mathbbmss{L}_2$-norm by \citet{rambachan2022robust}, is inspired by empirical processes and can be rewritten as
    \begin{equation}
        \hat\eta_0(X,0) - \Tilde{\eta}_0(X,0) = \hatEE{\hat b(X,Y_1,0) \mid X, A=0} + o_{\jPtarget}\left(\Tilde{\Delta}_n(X) \right)\,,
    \end{equation}
    as $d(\hat\mu_0(X,Y_1,0), \mutarget_0(X,Y_1,0))\pto 0$.
\end{remark}
When estimating the nested regression function $\etatarget_0(x,0) = \EE{\mutarget_0(x,Y_1,0)\mid X=x,A=0}$, a misspecified model for $\mutarget_0(x,y_1,0)$ would imply a misspecified model for $\etatarget_0(x,0)$. In fact, assuming that the regression estimator $\hatEE{\cdot\mid X=x,A=0}$ is stable, the bias term shows first-order dependence on the estimation error in the nuisance function. To gain robustness, one can instead learn an estimate of the \textit{augmented nested regression} $\EE{\mutarget_0(x,Y_1,0) + R_0 \left(Y_0 - \mutarget_0(x,Y_1,0)\right) / \gammatarget(x,Y_1,0) \mid X=x,A=0}$. This strategy provides protection against misspecification in $\mutarget_0(x,y_1,0)$, provided that a good estimate $\gammatarget(x,y_1,0)$ is available. We can make this property more formal.

\begin{theorem}[Oracle property]
\label{th:oracle}
    Assume that the regression estimator $\hatEE{\cdot\mid X=x,A=0}$ is stable with respect to distance $d$. Assume also that the estimated pseudo-outcomes converge in probability to truth, i.e.
    \begin{equation}
        d(\hat\mu_0(x,Y_1,0) + R_0 \left(Y_0 - \hat\mu_0(x,Y_1,0)\right) / \hat\gamma(x,Y_1,0), \mutarget_0(x,Y_1,0) + R_0 \left(Y_0 - \mutarget_0(x,Y_1,0)\right) / \gammatarget(x,Y_1,0))\pto 0\,.
    \end{equation}
    Then
    \begin{equation}
        \hat\eta_0(X,0) - \Tilde{\eta}_0(X,0) = \hatEE{\hat  b(X,Y_1,0) \mid X=x,A=0} + o_{\jPtarget}\left(\Tilde{\Delta}_n(X) \right)\,,
    \end{equation}
    where
    \begin{equation}
        \hat  b(x,y_1,0) = \frac{(\hat\mu_0(x,y_1,0) - \mutarget_0(x,y_1,0) ) (\hat\gamma(x,y_1,0) - \gammatarget(x,y_1,0))}{\hat\gamma(x,y_1,0)}\,,
    \end{equation}
    and $\hat\eta_0(X,0)$ is oracle efficient if $\hatEE{\hat b(X,Y_1,0) \mid X=x,A=0} = o_{\jPtarget}\left(\Tilde{\Delta}_n(x) \right)$.
\end{theorem}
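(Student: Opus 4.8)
The plan is to reduce the claim to the stability property of the regression estimator $\hatEE{\cdot \mid X=x, A=0}$, applied not to the plain pseudo-outcome $\hat\mu_0(x,Y_1,0)$ but to the \emph{augmented} pseudo-outcome
\[
\hat\xi_0 := \hat\mu_0(X,Y_1,0) + \frac{R_0}{\hat\gamma(X,Y_1,0)}\bigl(Y_0 - \hat\mu_0(X,Y_1,0)\bigr),
\]
with oracle counterpart $\xi^\star_0$ obtained by replacing $(\hat\mu_0,\hat\gamma)$ with $(\mutarget_0,\gammatarget)$. Two preliminary facts must be verified. First, $\xi^\star_0$ still targets $\mutarget_0$ pointwise, that is, $\EE{\xi^\star_0 \mid X=x, Y_1=y_1, A=0} = \mutarget_0(x,y_1,0)$: this holds because the conditional-independence requirement $R_0 \indep Y_0 \mid X,Y_1,A$ from Assumption~\ref{ass:mar_hard} makes the inverse-probability correction have conditional mean $\gammatarget(x,y_1,0)\bigl(\mutarget_0(x,y_1,0) - \mutarget_0(x,y_1,0)\bigr)/\gammatarget(x,y_1,0) = 0$, which is well defined since $\gammatarget \in (0,1)$ by the overlap requirement. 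Consequently the oracle regression $\Tilde\eta_0(x,0)$ of $\xi^\star_0$ onto $X$ given $A=0$ still estimates $\etatarget_0(x,0)$, and the oracle risk $\Tilde\Delta_n^2(x)$ is the very object appearing in the statement.

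Second, I would compute the conditional bias of the augmented pseudo-outcome. Conditioning on $\hat{\mathbbmss{P}}$ --- so that $\hat\mu_0,\hat\gamma$ act as fixed functions, using the independence of $\hat{\mathbbmss{P}}$ from the averaging fold --- and using again $R_0 \indep Y_0 \mid X,Y_1,A$ together with $\EE{R_0 \mid X=x, Y_1=y_1, A=0} = \gammatarget(x,y_1,0)$,
\[
\EE{\hat\xi_0 - \mutarget_0(x,y_1,0) \mid \hat{\mathbbmss{P}}, X=x, Y_1=y_1, A=0} = \bigl(\hat\mu_0(x,y_1,0) - \mutarget_0(x,y_1,0)\bigr)\Bigl(1 - \frac{\gammatarget(x,y_1,0)}{\hat\gamma(x,y_1,0)}\Bigr),
\]
which rearranges to exactly the claimed $\hat b(x,y_1,0) = (\hat\mu_0(x,y_1,0) - \mutarget_0(x,y_1,0))(\hat\gamma(x,y_1,0) - \gammatarget(x,y_1,0))/\hat\gamma(x,y_1,0)$ --- a product of two nuisance errors, hence second order, whereas the plain pseudo-outcome $\hat\mu_0(x,Y_1,0)$ carries the first-order bias $\hat\mu_0 - \mutarget_0$.

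With these in hand the argument closes quickly: by hypothesis the estimated augmented pseudo-outcome converges to its oracle version in the metric $d$, and the regression estimator is assumed stable, so the stability definition --- in the empirical-process rewriting from the Remark, now applied with $\hat\xi_0$ in the role of the regressed variable and $\mutarget_0(x,y_1,0)$ as its pointwise target --- yields
\[
\hat\eta_0(X,0) - \Tilde\eta_0(X,0) = \hatEE{\hat b(X,Y_1,0) \mid X, A=0} + o_{\jPtarget}\bigl(\Tilde\Delta_n(X)\bigr),
\]
with $\hat b$ as just computed. The final assertion is then immediate: if $\hatEE{\hat b(X,Y_1,0) \mid X, A=0} = o_{\jPtarget}(\Tilde\Delta_n(X))$, the identity collapses to $\hat\eta_0(X,0) - \Tilde\eta_0(X,0) = o_{\jPtarget}(\Tilde\Delta_n(X))$, which is exactly oracle efficiency.

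The step I expect to be the main obstacle is the last one: the stability property was stated for the particular regression of $\hat\mu_0(x,Y_1,0)$, so I must argue cleanly that it is genuinely a property of the \emph{estimator} --- its sensitivity to perturbations of the variable it regresses --- and therefore transfers verbatim when that variable is instead $\hat\xi_0$, with the induced conditional bias $\hat b$ replacing the plain one. A secondary subtlety is aligning the conditioning sets, since the oracle regression conditions only on $A=0$ while the bias computation conditions on $(X,Y_1,A=0)$ and on $\hat{\mathbbmss{P}}$; gluing them requires the tower property and the independence of $\hat{\mathbbmss{P}}$ from the averaging fold, so that $\hat\mu_0,\hat\gamma$ may be treated as fixed when taking the inner expectation and no circularity arises.
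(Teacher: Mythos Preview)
Your proposal is correct and follows essentially the same route as the paper: apply stability to the augmented pseudo-outcome, compute its conditional bias via iterated expectations (using $R_0 \indep Y_0 \mid X,Y_1,A$ and $\EE{R_0\mid X,Y_1,A}=\gammatarget$) to obtain the product form $\hat b$, and read off oracle efficiency when that bias is $o_{\jPtarget}(\Tilde\Delta_n)$. The paper's own proof is extremely terse---it simply asserts that ``stability and consistency together imply'' the expansion with $\hat b$ given ``by iterated expectation''---so your version is in fact a more complete spelling-out of the same argument, and the subtleties you flag (that stability must be read as a property of the regression operator so it transfers to the augmented regressand, and the tower-property gluing of the conditioning sets) are exactly the points the paper leaves implicit.
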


\begin{remark}
    The previous Theorem gives conditions for achieving the oracle rate, which can be phrased in terms of nuisance smoothness and problem dimension. For example, if we assume that $\mutarget_0(X,Y_1,0)$ is $\alpha$-smooth in $X$ and $\beta$-smooth in $Y_1$, and $\gammatarget(X,Y_1,0)$ is $\rho$-smooth, then we can build nonparametric estimators $\hat\mu_0(X,Y_1,0)$ and $\hat\gamma(X,Y_1,0)$ whose convergence rates are $n^{-1/(2+p/\alpha + 1/\beta)}$ and $n^{-1/(2+(p+1)/\rho)}$, respectively. By the previous Theorem, we can achieve the oracle rate whenever $\rho \geq (p+1) / [\alpha(2+1/\beta)(2+p/\alpha+1/\beta) / p - 2]$. This is relevant because, without augmentation, we would typically only achieve the oracle rate when $p/\alpha\to0$, which holds only when $\alpha\to\infty$. In contrast, with augmentation, for each finite $\alpha$ we can find a suitable $\rho$ that satisfies the above oracle rate condition.
\end{remark}

\subsection{Inference}
We can now introduce a set of high-level assumptions that are useful for inferential purposes.
\begin{assumption}[Inference]
\label{ass:reg}
Let the number of cross-fitting folds be fixed at $J$, and assume that: 
\begin{enumerate}[label=\textbf{\alph*.}]
    \item For each $j\in\{1,\dots,J\}$, one has \begin{equation}
        \influence{\data;\thetatarget;\hat{\mu}^{[-j]};\hat{\pi}^{[-j]};\hat{\gamma}^{[-j]};\hat{\eta}^{[-j]}}\Lto{2} \influence{\data;\thetatarget;\mutarget;\pitarget;\gammatarget;\etatarget}\,.
    \end{equation}
    \item For each $j\in\{1,\dots,J\}$, one has 
    \begin{equation}
        \sqrt{n} \sum_{j=1}^J \kappa^{[j]} = o_\mathbbmss{P}(1)\,,
    \end{equation} 
    where the \textit{remainder term} $\kappa^{[j]}$ is defined as $\kappa^{[j]} = \theta\left(\hat{\mathbbmss{P}}^{[-j]}\right) - \theta(\jPtarget) + \EE{\influence{\data;\thetatarget;\hat{\mu}^{[-j]};\hat{\pi}^{[-j]};\hat{\gamma}^{[-j]};\hat{\eta}^{[-j]}}}$.
    \item Given $\xi>0$, for each $j\in\{1,\dots,J\}$, $\hat{\pi}^{[-j]}$ and $\hat{\gamma}^{[-j]}$ are bounded away from $\xi$ and $1-\xi$ with probability 1.
\end{enumerate}
\end{assumption}

\begin{remark}
    These are all standard assumptions routinely employed in causal inference. See \citet{kennedy2024semiparametric} for a review. Assumption \textbf{a} is used to control the empirical process term; Assumption \textbf{b} is used to control the remainder term; Assumption \textbf{c} is commonly referred to as \textit{strong overlap}.
\end{remark}

\begin{theorem}[Asymptotic normality and efficiency]
\label{th:an}
    Under Assumption~\ref{ass:reg}, the estimator $\hat\theta$ obtained by solving the estimating equation~\ref{eq:est_eq} is asymptotically normal and achieves the semiparametric efficiency bound as $n$ goes to infinity, that is
    \begin{equation}
        \sqrt{n} \left( \hat\theta - \thetatarget \right) \dto \Normal{0}{\VV{\influence{\data}}}\,.
    \end{equation}
\end{theorem}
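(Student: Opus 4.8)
The plan is to show that $\hat\theta$ is regular and asymptotically linear with influence function equal to the efficient influence function $\influence{\data}$ of Proposition~\ref{prop:eff_bound}, and then to invoke the classical fact that an estimator with this property attains the semiparametric efficiency bound \citep{bickel1993efficient, tsiatis2006semiparametric}. For a function $g$ of the data write $\mathbbmss{P}_n^{[j]} g := (J/n)\sum_{i\in\mathbbmss{P}_n^{[j]}} g(\data_i)$, $\mathbbmss{P}_n g := n^{-1}\sum_{i=1}^{n} g(\data_i)$, and $\jPtarget g := \EE{g(\data)}$, and let $\hat\nu^{[-j]} := (\hat\mu^{[-j]}, \hat\pi^{[-j]}, \hat\gamma^{[-j]}, \hat\eta^{[-j]})$ collect the cross-fitted nuisances, with $\nu^\star$ the population counterpart, so that $\influence{\data} = \influence{\data;\thetatarget;\nu^\star}$. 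Because $\influence{\data;\theta;\nu}$ is affine in $\theta$ with slope $-A/\EE{A}$ (Proposition~\ref{prop:eff_bound}), solving the estimating equation~\eqref{eq:est_eq} and replacing the self-normalizing factor $\big(\sum_{i\in\mathbbmss{P}_n^{[j]}} A_i\big)^{-1}$ of \eqref{eq:est_simple}--\eqref{eq:est_hard} by $\EE{A}^{-1}$ at the cost of an $o_{\mathbbmss{P}}(n^{-1/2})$ error yields
\begin{equation}
    \sqrt{n}\left(\hat\theta - \thetatarget\right) = \frac{1}{\sqrt{n}}\sum_{j=1}^{J}\sum_{i\in\mathbbmss{P}_n^{[j]}} \influence{\data_i;\thetatarget;\hat\nu^{[-j]}} + o_{\mathbbmss{P}}(1)\,,
\end{equation}
so it is enough to analyze the cross-fitted average of the plugged-in influence function evaluated at the true target value $\thetatarget$.

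Adding and subtracting $\influence{\data_i}$ inside each fold and using $\jPtarget\influence{\data}=0$, I would decompose, for every $j$,
\begin{equation}
    \frac{1}{\sqrt{n}}\sum_{i\in\mathbbmss{P}_n^{[j]}} \influence{\data_i;\thetatarget;\hat\nu^{[-j]}} = \frac{1}{\sqrt{n}}\sum_{i\in\mathbbmss{P}_n^{[j]}} \influence{\data_i} \;+\; T^{[j]}_{\mathrm{ep}} \;+\; T^{[j]}_{\mathrm{drift}}\,,
\end{equation}
with centered empirical-process term $T^{[j]}_{\mathrm{ep}} := n^{-1/2}\sum_{i\in\mathbbmss{P}_n^{[j]}}\big\{\big(\influence{\data_i;\thetatarget;\hat\nu^{[-j]}} - \influence{\data_i}\big) - \jPtarget\big(\influence{\data;\thetatarget;\hat\nu^{[-j]}} - \influence{\data}\big)\big\}$ and drift term $T^{[j]}_{\mathrm{drift}} := (\sqrt{n}/J)\,\jPtarget\influence{\data;\thetatarget;\hat\nu^{[-j]}}$. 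Conditional on the hold-out sample $\hat{\mathbbmss{P}}^{[-j]}$, the summands of $T^{[j]}_{\mathrm{ep}}$ are i.i.d.\ and mean zero, so its conditional variance is $J^{-1}\VV{\big(\influence{\data;\thetatarget;\hat\nu^{[-j]}} - \influence{\data}\big) \mid \hat{\mathbbmss{P}}^{[-j]}} \le J^{-1}\EE{\big(\influence{\data;\thetatarget;\hat\nu^{[-j]}} - \influence{\data}\big)^2 \mid \hat{\mathbbmss{P}}^{[-j]}}$, which converges to $0$ in probability by Assumption~\ref{ass:reg}\textbf{a}; strong overlap (Assumption~\ref{ass:reg}\textbf{c}) keeps the inverse weights $1/\hat\gamma^{[-j]}$ and $1/(1-\hat\pi^{[-j]})$ bounded, so this conditional second moment is finite. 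A conditional Chebyshev argument (as in \citealp{chernozhukov2018double, kennedy2024semiparametric}) then gives $T^{[j]}_{\mathrm{ep}} = o_{\mathbbmss{P}}(1)$, with no Donsker-type requirement. For the drift, the identification formula (Lemma~\ref{lemma:identification}) together with the definition of the efficient influence function imply that $\jPtarget\influence{\data;\thetatarget;\hat\nu^{[-j]}}$ coincides, up to a fixed multiplicative constant produced by solving the affine estimating equation, with the remainder $\kappa^{[j]} = \theta(\hat{\mathbbmss{P}}^{[-j]}) - \theta(\jPtarget) + \EE{\influence{\data;\thetatarget;\hat\nu^{[-j]}}}$ of Assumption~\ref{ass:reg}\textbf{b}; the substantive point -- the Neyman-orthogonality structure underlying the multiple-robustness remark -- is that $\kappa^{[j]}$ is a genuinely \emph{second-order} functional of the nuisance errors, so Assumption~\ref{ass:reg}\textbf{b} gives $\sqrt{n}\sum_{j=1}^{J} T^{[j]}_{\mathrm{drift}} = o_{\mathbbmss{P}}(1)$.

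It remains to reassemble over the $J$ folds. Since $\{1,\dots,n\}$ is partitioned into the folds, $\sum_{j=1}^{J} n^{-1/2}\sum_{i\in\mathbbmss{P}_n^{[j]}} \influence{\data_i} = \sqrt{n}\,\mathbbmss{P}_n\influence{\data}$ is the ordinary full-sample average of the mean-zero, square-integrable efficient influence function (square-integrability following from Assumption~\ref{ass:reg}\textbf{c} and the $\mathbbmss{L}^2$ control in Assumption~\ref{ass:reg}\textbf{a}). Collecting the three terms,
\begin{equation}
    \sqrt{n}\left(\hat\theta - \thetatarget\right) = \sqrt{n}\,\mathbbmss{P}_n\influence{\data} + o_{\mathbbmss{P}}(1) \dto \Normal{0}{\VV{\influence{\data}}}
\end{equation}
by the Lindeberg--L\'evy central limit theorem and Slutsky's lemma, and since $\VV{\influence{\data}}$ is the semiparametric efficiency bound of Proposition~\ref{prop:eff_bound}, $\hat\theta$ is asymptotically efficient. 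The same argument applies verbatim under Assumption~\ref{ass:mar_simple} and under Assumption~\ref{ass:mar_hard}, using the corresponding influence function \eqref{eq:eif_simple} or \eqref{eq:eif_hard}. I expect the main obstacle to be the analysis of $T^{[j]}_{\mathrm{drift}}$: one has to verify, from the explicit form of the two EIFs together with Lemma~\ref{lemma:identification}, that the population drift $\jPtarget\influence{\data;\thetatarget;\hat\nu^{[-j]}}$ collapses to products of nuisance estimation errors rather than retaining a first-order plug-in bias, so that the high-level rate condition in Assumption~\ref{ass:reg}\textbf{b} suffices to annihilate it; a secondary, largely bookkeeping difficulty is eliminating the self-normalization in \eqref{eq:est_eq} and checking that the $\mathbbmss{L}^2$-limit in Assumption~\ref{ass:reg}\textbf{a} is the true EIF, so that the limiting variance equals the efficiency bound rather than a larger quantity.
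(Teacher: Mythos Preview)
Your proposal is correct and follows essentially the same three-term decomposition as the paper's proof: a leading influence-function average handled by the CLT, a centered empirical-process term controlled via a conditional Chebyshev argument under Assumption~\ref{ass:reg}\textbf{a} and sample splitting, and a drift/remainder term killed by Assumption~\ref{ass:reg}\textbf{b}. The only cosmetic differences are that the paper centers the decomposition at the population limits $(\bar\mu,\bar\pi,\bar\gamma,\bar\eta)$ of Assumption~\ref{ass:reg}\textbf{a} rather than at the true nuisances and works with a single fold for notational simplicity; the issue you flag at the end---that the $\mathbbmss{L}^2$-limit must be the true EIF for the limiting variance to equal the efficiency bound---is exactly what bridges the two presentations (and note a small slip: you want $\sum_j T^{[j]}_{\mathrm{drift}}=o_{\mathbbmss{P}}(1)$, not $\sqrt{n}\sum_j T^{[j]}_{\mathrm{drift}}$, since $T^{[j]}_{\mathrm{drift}}$ already carries the $\sqrt{n}$).
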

Endowed with a procedure for valid asymptotic inference, we now turn to the empirical validation of our estimators.

\section{Simulation study}
To evaluate the finite-sample performance of our proposed estimators and to validate their theoretical properties, we conduct an extensive Monte Carlo simulation study. The primary goal is to assess the estimators' multiple robustness, particularly when some or all of the nuisance function models are misspecified. Our data generating process (DGP), which closely resembles the one pioneered by \citet{kang2007demystifying} and then revisted by \citet{sant2020doubly}, is designed to create scenarios where each model can be independently well or misspecified. For each simulation run, we generate a data set with a sample size of $n=2000$ (Appendix Section~\ref{sec:supp_sim} provides simulations for additional sample sizes). The true ATT is fixed at $\thetatarget=5$. We repeat our simulations for 500 runs. The DGP is structured as follows:
\begin{itemize}
    \item \textbf{Covariates.} We first generate a set of four true baseline covariates $Z=(Z_1,Z_2,Z_3,Z_4)$ from a standard normal distribution. From these, we create a corresponding set of four observed covariates $X=(X_1,X_2,X_3,X_4)$ by applying complex, non-linear transformations to $Z$ (e.g., involving exponential, polynomial, and interaction terms). This setup ensures that models based on $Z$ are correctly specified, while models using only $X$ are misspecified. In particular, the $X$'s are defined as
    \begin{equation}
            X_{1} = \exp\left(\frac{Z_{1}}{2}\right)\,, \quad
            X_{2} = \frac{Z_{2}}{1+\exp(Z_{1})} + 10\,,\quad
            X_{3} = \left(\frac{Z_{1}Z_{3}}{25} + 0.6\right)^3\,, \quad
            X_{4} = (Z_{2} + Z_{4} + 20)^2\,.
    \end{equation}
    \item \textbf{Treatment.} The treatment assignment indicator, $A$, is generated from a Bernoulli distribution with a probability (propensity score) that is a logistic function of the true covariates $Z$, that is
    \begin{equation}
        \text{logit}(\pi(Z)) = \text{logit}\left(\PP{A=1 \mid Z}\right) = -Z_{1} + 0.5Z_{2} - 0.25Z_{3} - 0.1Z_{4}\,.
    \end{equation}
    \item \textbf{Outcomes.} Pre-treatment outcomes are generated as linear functions of $Z$ plus standard normal error terms, that is
    \begin{equation}
        Y_0 = 210 + 27.4Z_{1} + 13.7Z_{2} + 13.7Z_{3} + 13.7Z_{4} + \varepsilon\,, \quad \text{where} \quad \varepsilon \sim \Normal{0}{1}\,.
    \end{equation}
    Post-treatment potential outcomes are then generated as
    \begin{equation}
        \begin{split}
            &Y_1^{(0)} = Y_0 + \varepsilon\,, \quad \text{where} \quad \varepsilon \sim \Normal{0}{1}\,, \\
            &Y_1^{(1)} = Y_1^{(0)} + A\thetatarget\,,
        \end{split}
    \end{equation}
    which are then summarized, by consistency assumption, into the observed post-treatment outcome $Y_1 = A Y_1^{(1)} + (1-A) Y_1^{(0)}$.
    \item \textbf{Missingness mechanism.} To align with our theoretical framework, we simulate two distinct missingness patterns for the pre-treatment outcome $Y_0$. Under Assumption~\ref{ass:mar_simple}, the missingness indicator $R_0$ is drawn from a Bernoulli distribution where the probability is a logistic function of the true covariates $Z$ and the treatment status $A$, i.e.
    \begin{equation}
        \text{logit}(\lorenzo{\gamma}(Z,A)) = \text{logit}\left(\PP{R_0=1 \mid Z,A}\right) = - 0.25 Z_{1} - 0.1 Z_{2} - 0.5 Z_{3} + 0.3 Z_{4} - 0.2 A\,.
    \end{equation}
    Under Assumption~\ref{ass:mar_hard}, this probability additionally depends on the post-treatment outcome $Y_1$, creating a more complex logistic dependency:
    \begin{equation}
        \text{logit}(\lorenzo{\gamma}(Z,A,\lorenzo{Y_1})) = \text{logit}\left(\PP{R_0=1 \mid Z,A,\lorenzo{Y_1}}\right) = - 0.25 Z_{1} - 0.1 Z_{2} - 0.5 Z_{3} + 0.3 Z_{4} - 0.2 A + \lorenzo{0.01} Y_1 \,.
    \end{equation}
\end{itemize}
For each of the 500 simulated datasets, we estimate the ATT using the influence function-based estimators in Equations \ref{eq:est_simple} and \ref{eq:est_hard}. The nuisance functions are estimated using standard logistic and ordinary least squares models. To test the multiple robustness property, we cycle through all possible combinations of correctly specified and misspecified models for the nuisance functions. A model is correctly specified if it uses the true covariates $Z$ as predictors and misspecified if it uses the observed, non-linear covariates $X$.

We evaluate the performance of our estimators across these different specifications using two standard metrics: \textit{bias}, which is the difference between estimated ATT $\hat\theta$ and true ATT $\thetatarget$; and \textit{root mean squared error} (RMSE), which is the square root of the average squared difference between the estimate and the true value. RMSE penalizes large errors and captures both bias and variance, providing a comprehensive measure of estimator quality. \lorenzo{In Appendix Section~\ref{sec:supp_sim}, we also display simulation results for a third evaluation metric, \textit{empirical coverage}.} 

\begin{figure}[t]
    \centering
    \includegraphics[width=\linewidth]{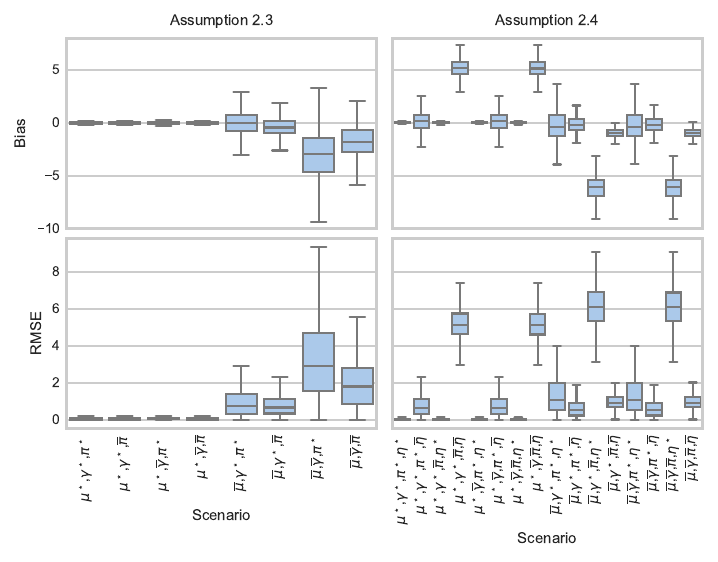}
    \caption{Simulation results. Boxplots of Bias (top row) and Root Mean Squared Error (RMSE) (bottom row) from 500 simulation runs. The left column shows the performance of the estimator under Assumption~\ref{ass:mar_simple}, while the right column shows the performance under Assumption~\ref{ass:mar_hard}. Each scenario on the x-axis represents a combination of correctly specified (indicated by a star $\star$) and misspecified nuisance functions (indicated by a bar). The results visually confirm the multiple robustness property of our estimators. Both bias and RMSE are negligible when a sufficient subset of nuisance models is correctly specified. Performance degrades significantly, as theory predicts, when the conditions for multiple robustness are violated.}
    \label{fig:sim_results}
\end{figure}

The results, displayed in Figure~\ref{fig:sim_results} and Appendix Tables \ref{tab:simresults_ass23} and \ref{tab:simresults_ass24}, provide strong evidence for the theoretical properties of our estimators. As predicted by theory, the bias and the RMSE are negligible for both estimators in all scenarios where the conditions for consistency are met. This holds true, for example, when the outcome model is correctly specified. Conversely, the estimators show a clear bias and higher RMSE in the theoretically inconsistent scenarios. This demonstrates the estimators' breaking point\lorenzo{, which we further investigate with additional simulations in Appendix Section \ref{sec:supp_sim}}. Overall, the simulation results strongly support the validity and robustness of the proposed estimators.

\section{Conclusions}
The Difference-in-Differences (DiD) method is a cornerstone of applied research, yet its validity is often threatened by the practical challenge of missing outcome data -- a problem that can introduce significant selection bias and invalidate standard estimators. This paper addresses this critical gap by developing a rigorous and comprehensive framework for DiD estimation when pre or post-treatment outcomes are missing at random (MAR). Drawing on semiparametric theory, we make several key contributions. First, we establish nonparametric identification of the Average Treatment Effect on the Treated (ATT) under two distinct and plausible MAR mechanisms: one where missingness is independent of the outcome conditional on covariates, and another where it may depend on the post-treatment outcome. For each setting, we derive the semiparametric efficiency bound, establishing a formal benchmark for asymptotic precision. We then propose novel estimators that achieve these bounds, ensuring asymptotic semiparametric efficiency. A critical feature of our estimators is their multiple robustness, which guarantees consistency as long as a subset of the nuisance function models is correctly specified, providing a layer of protection against model misspecification in practice.

The implications of this work are both theoretical and practical. Our framework provides applied researchers in economics, public health, and social sciences with a principled and efficient toolkit to conduct credible DiD analysis using incomplete panel data. By formally accounting for missing data, our estimators enhance the reliability of causal claims drawn from real-world observational studies where complete data is the exception rather than the rule.

While this paper focuses on the canonical two-group, two-period setting for clarity, the principles developed here open several avenues for future research. A natural next step is the extension of this framework to more complex scenarios, such as the staggered treatment adoption settings that have been the focus of much recent literature. Further investigation into the performance of different machine learning methods for the nuisance components, particularly the nested regression function, would also be valuable. Finally, incorporating our efficient estimators into standard DiD software would greatly facilitate their adoption by the broader research community, strengthening the quality and credibility of causal inference across disciplines.


\bibliography{iclr2025_conference}
\bibliographystyle{plainnat}

\appendix
\section*{Appendix}
\section{Proof of main statements}

\subsection{Proof of Lemma~\ref{lemma:identification}}
\begin{proof}
Under Assumption~\ref{ass:identify} and Assumption~\ref{ass:mar_simple}, the following chain of equalities holds:
\begin{equation}
\begin{split}
    \thetatarget &= \EE{Y^{(1)}_1 - Y^{(0)}_1 \mid A=1} \\
    & = \EE{Y^{(1)}_1 - Y_0 \mid A=1} - \EE{Y^{(0)}_1 - Y_0 \mid A=1} \\
    & = \EE{Y^{(1)}_1 - \EE{Y_0\mid X,A=1}\mid A=1} - \EE{\EE{Y^{(0)}_1 - Y_0 \mid X, A=1} \mid A=1} \\
    & = \EE{Y^{(1)}_1\mid A=1} - \EE{\EE{Y_0 \mid X, A=1}\mid A=1} - \EE{\EE{Y^{(0)}_1 - Y_0 \mid X, A=0} \mid A=1} \\
    & =  \EE{Y_1\mid A=1} - \EE{\EE{Y_0 \mid X, A=1,R_0=1}\mid A=1} - \EE{\EE{Y_1 \mid X, A=0} \mid A=1}\\
    &+ \EE{\EE{Y_0 \mid X, A=0,R_0=1} \mid A=1} \\
    & =  \EE{Y_1 - \EE{Y_0 \mid X, A=1,R_0=1} - \EE{Y_1 \mid X, A=0} +\EE{Y_0 \mid X, A=0,R_0=1} \mid A=1} \\
    &= \EE{\frac{A}{\EE{A}} \left( Y_1 - \EE{Y_0\mid X,A=1, R_0=1} -\EE{Y_1\mid X, A=0} + \EE{Y_0\mid X, A=0, R_0=1} \right)}\,.
\end{split}
\end{equation}

Under Assumption~\ref{ass:identify} and Assumption~\ref{ass:mar_hard}, the following chain of equalities holds:
\begin{equation}
\begin{split}
    \thetatarget &= \EE{Y^{(1)}_1 - Y^{(0)}_1 \mid A=1} \\
    & = \EE{Y^{(1)}_1 - Y_0 \mid A=1} - \EE{Y^{(0)}_1 - Y_0 \mid A=1} \\
    & = \EE{Y^{(1)}_1 - \EE{Y_0\mid X,Y_1, A=1}\mid A=1} - \EE{\EE{Y^{(0)}_1 - Y_0 \mid X, A=1} \mid A=1} \\
    & = \EE{Y^{(1)}_1\mid A=1} - \EE{\EE{Y_0 \mid X,Y_1, A=1}\mid A=1} - \EE{\EE{Y^{(0)}_1 - Y_0 \mid X, A=0} \mid A=1} \\
    & =  \EE{Y_1\mid A=1} - \EE{\EE{Y_0 \mid X, Y_1, A=1,R_0=1}\mid A=1} - \EE{\EE{Y_1 \mid X, A=0} \mid A=1}\\
    &+ \EE{\EE{\EE{Y_0\mid X,Y_1,A=0} \mid X, A=0} \mid A=1} \\
    & =  \EE{Y_1 - \EE{Y_0 \mid X,Y_1, A=1,R_0=1} - \EE{Y_1 \mid X, A=0} +\EE{\EE{Y_0 \mid X, Y_1, A=0,R_0=1} \mid X,A=0} \mid A=1} \\
    &= \EE{\frac{A}{\EE{A}} \left( Y_1 - \EE{Y_0\mid X,Y_1, A=1, R_0=1} -\EE{Y_1\mid X, A=0} + \EE{\EE{Y_0\mid X, Y_1, A=0, R_0=1} \mid X,A=0} \right)}\,.
\end{split}
\end{equation}
\end{proof}

\subsection{Proof of Proposition~\ref{prop:eff_bound}}
\begin{proof}
    The semiparametric efficiency bound is given by the variance of the efficient influence function \citep{bickel1993efficient}. We therefore need to show that Equation~\ref{eq:eif_simple} and Equation~\ref{eq:eif_hard} are the efficient influence functions under Assumption~\ref{ass:mar_simple} and Assumption~\ref{ass:mar_hard}, respectively. In a nonparametric model, the efficient influence function must satisfy the \textit{Von Mises expansion}
    \begin{equation}
        \hat\theta_\text{plug-in} - \thetatarget = - \EE{\influence{\data;\hat\theta_\text{plug-in}}} + \kappa_2(\hat{\mathbbmss{P}}, \jPtarget)\,,
    \end{equation}
    where $\hat\theta_\text{plug-in}$ is a plug-in estimator of the expressions in Lemma~\ref{lemma:identification} computed on a sample $\hat{\mathbbmss{P}}$, and $\kappa_2(\hat{\mathbbmss{P}}, \jPtarget)$ is a \textit{second-order remainder term} (which means it only depends on products or squares of differences between $\jPtarget$ and $\hat{\mathbbmss{P}}$) -- see \citet{kennedy2024semiparametric} and Lemma 2 in \citet{kennedy2023semiparametric} for details. Therefore, we need to evaluate the remainder term $\kappa_2(\hat{\mathbbmss{P}}, \jPtarget)$ and verify that it is second-order. 

    Under Assumption~\ref{ass:mar_simple}, the remainder term takes the form
    \begin{equation}
               \begin{split}
            \kappa_2(\hat{\mathbbmss{P}}, \jPtarget) &= \hat\theta_\text{plug-in} - \thetatarget + \EE{\influence{\data;\hat\theta_\text{plug-in}}} \\
            &= (\hat\theta_\text{plug-in} - \thetatarget) \left( 1 - \frac{\EE{A}}{\hatEE{A}} \right) \\
            &- \frac{1}{\hatEE{A}} \EE{(\hat\mu_0(X,1) - \mutarget_0(X,1))\left(1 - \frac{\gammatarget(X,1)}{\hat\gamma(X,1)} \right)} \\
            &-\frac{1}{\hatEE{A}} \EE{(\hat\mu_1(X,0) - \mutarget_1(X,0))\left(\frac{\hat\pi(X) - \pitarget(X)}{1-\hat\pi(X)} \right)}\\
            &+\EE{\frac{(1-\pitarget(X))\hat\pi(X)}{(1-\hat\pi(X))\hatEE{A}}(\mutarget_0(X,0) - \hat\mu_0(X,0))\left(\frac{\gammatarget(X,0)}{\hat\gamma(X,0)} -1 \right)} \\
            &+\EE{\frac{1}{\hatEE{A}}(\hat\mu_0(X,0) - \mu_0(X,0)) \frac{\hat\pi(X) - \pitarget(X)}{1-\hat\pi(X)}} \,.
        \end{split}
    \end{equation}
    Under Assumption~\ref{ass:mar_hard}, the remainder term takes the form
    \begin{equation}
        \begin{split}
            \kappa_2(\hat{\mathbbmss{P}}, \jPtarget) &= \hat\theta_\text{plug-in} - \thetatarget + \EE{\influence{\data;\hat\theta_\text{plug-in}}} \\
            &= (\hat\theta_\text{plug-in} -\thetatarget) \left( 1 - \frac{\EE{A}}{\hatEE{A}} \right) \\
            &- \frac{1}{\hatEE{A}} \EE{(\hat\mu_0(X,Y_1,1) - \mutarget_0(X,Y_1,1))\left(1 - \frac{\gammatarget(X,Y_1,1)}{\hat\gamma(X,Y_1,1)} \right)} \\
            &-\frac{1}{\hatEE{A}} \EE{(\hat\mu_1(X,0) - \mutarget_1(X,0))\left(\frac{\hat\pi(X) - \pitarget(X)}{1-\hat\pi(X)} \right)}  \\
            &+\EE{\frac{(1-\pitarget(X))\hat\pi(X)}{(1-\hat\pi(X))\hatEE{A}}(\mutarget_0(X,0) - \hat\mu_0(X,0))\left(\frac{\gammatarget(X,Y_1,0)}{\hat\gamma(X,Y_1,0)} -1 \right)} \\
            &+\EE{\frac{1}{\hatEE{A}}(\hat\eta_0(X,0) - \etatarget_0(X,0)) \frac{\hat\pi(X) - \pitarget(X)}{1-\hat\pi(X)}} \,. \\
        \end{split}
    \end{equation}
Both remainders are second-order, and this completes the proof.
\end{proof}

\subsection{Proof of Theorem \ref{th:oracle}}
\begin{proof}
    By definition, stability and consistency together imply
    \begin{equation}
        \hat\eta_0(X,0) - \Tilde{\eta}_0(X,0) = \hatEE{\hat b(X,Y_1,0) \mid X, A=0} + o_{\jPtarget}\left(\Tilde{\Delta}_n(X) \right)\,,
    \end{equation}
    where
    \begin{equation}
        \hat  b(x,y_1,0) = \frac{(\hat\mu_0(x,y_1,0) - \mutarget_0(x,y_1,0) ) (\hat\gamma(x,y_1,0) - \gammatarget(x,y_1,0))}{\hat\gamma(x,y_1,0)}\,,
    \end{equation}
    by iterated expectation. Therefore if $\hatEE{\hat b(X,Y_1,0) \mid X=x,A=0} = o_{\jPtarget}\left(\Tilde{\Delta}_n(x) \right)$ the result follows.
\end{proof}

\subsection{Proof of Theorem \ref{th:an}}
For simplicity, we show the result for a single cross-fitting fold. In particular, we denote the distribution where the nuisance functions are trained as $\hat{\mathbbmss{P}}$ and the distribution where the influence functions are approximated as $\mathbbmss{P}_{n}$. 

First, assume that $\VV{\influence{\data;\thetatarget;\thetatarget;\mutarget;\pitarget;\gammatarget;\etatarget}}<\infty$ is known. By \textit{Von Mises} expansion, we have
\begin{equation}
\begin{split}
    \hat\theta - \thetatarget &= \mathbbmss{P}_{n}\left[\influence{\data;\thetatarget;\thetatarget;\mutarget;\pitarget;\gammatarget;\etatarget}\right] \\
    &+ \left(\mathbbmss{P}_{n} - \jPtarget\right) \left[ \influence{\data;\thetatarget;\hat\mu;\hat\pi;\hat\gamma;\hat\eta} - \influence{\data;\thetatarget;\thetatarget;\mutarget;\pitarget;\gammatarget;\etatarget} \right] \\
    &+ \kappa_2(\hat{\mathbbmss{P}},\jPtarget) \,.
\end{split}
\end{equation}
We refer to the three elements on the right-hand side respectively as the \textit{influence function term},  the \textit{empirical process term} and the \textit{remainder term}. 

We analyze each component independently. The first influence function term on the right-hand side is a sum of mean 0, finite-variance random variables, with the right $n^{-1/2}$ scaling, and by the Central Limit Theorem this converges to a Normal distribution with mean 0 and variance equal to $\VV{\influence{\data;\thetatarget;\thetatarget;\mutarget;\pitarget;\gammatarget;\etatarget}}$.

We now need to show that the CLT component dominates the other terms. In particular, we need to show that the second empirical process term is of order $O_{\mathbbmss{P}}\left(n^{-1/2}\right)$ and that the third remainder term is of order $O_{\mathbbmss{P}}\left(n^{-1/2}\right)$. The former follows from Lemma 3.7 in \citet{testa2025doubly} under Assumption~\ref{ass:reg}, \textbf{a}; the latter directly follows from Assumption~\ref{ass:reg}, \textbf{b}.

Summarizing the previous results, we have
\begin{equation}
     \sqrt{n} \left( \hat\theta - \thetatarget \right) = \frac{1}{\sqrt{n}}\sum_{i=1}^n \influence{\data_i;\thetatarget;\thetatarget;\mutarget;\pitarget;\gammatarget;\etatarget} + o_\mathbbmss{P}(1)\,,
\end{equation}
from which the required CLT follows. Finally, by Slutsky theorem, the previous result holds also when $\hat{\sigma}^2 \pto \VV{\influence{\data;\thetatarget;\thetatarget;\mutarget;\pitarget;\gammatarget;\etatarget}}$ replaces $\VV{\influence{\data;\thetatarget;\thetatarget;\mutarget;\pitarget;\gammatarget;\etatarget}}$. 

\section{Additional theoretical \lorenzo{and practical} remarks}

\begin{table}[h]
    \centering
    \begin{tabular}{ll}
        \toprule
        Assumption \ref{ass:mar_simple} & Assumption \ref{ass:mar_hard} \\
        \midrule
        $\mutarget_0(X,1)$ or $\gammatarget(X,1)$ & $\mutarget_0(X,Y_1,1)$ or $\gammatarget(X,Y_1,1)$ \\
        $\mutarget_1(X,0)$ or $\pitarget(X)$ & $\mutarget_1(X,0)$ or $\pitarget(X)$\\
        $\mutarget_0(X,0)$ or $\gammatarget(X,0)$ & $\mutarget_0(X,\lorenzo{Y_1},0)$ or $\gammatarget(X,Y_1,0)$ \\ 
        $\mutarget_0(X,0)$ or $\pitarget(X)$ & $\etatarget_0(X,0)$ or $\pitarget(X)$ \\
        \bottomrule
    \end{tabular}
    \caption{Summary of multiple robustness property}
    \label{tab:multiplerobustness}
\end{table}

{\color{black}
\subsection{Practical Estimation of the Nested Regression $\eta_0^*$}

The estimation of the nested regression function, $\etatarget_0(x,0) = \EE{\mutarget_0(x, Y_1, 0)\mid X=x, A=0}$, is a critical component for the estimator under Assumption~\ref{ass:mar_hard}. As discussed in Section~\ref{sec:nested}, this is a non-trivial, \enquote{DR-Learner} style problem \citep{kennedy2023towards}. We provide two key pieces of practical guidance: an algorithm for the robust augmented estimator (Theorem~\ref{th:oracle}) and recommendations for implementation.

\subsubsection{Algorithm for Augmented Nested Regression}
To protect against misspecification of $\hat{\mu}_0$, we recommend estimating $\etatarget_0$ using the augmented pseudo-outcome strategy from Theorem~\ref{th:oracle}. This approach uses both $\hat{\mu}_0$ and $\hat{\gamma}$ to create a robust target for the final regression. Algorithm~\ref{alg:eta_estimation} provides a concrete, cross-fitted strategy.

\begin{algorithm}[h!]
\caption{Cross-Fitted Augmented Estimator for $\hat{\eta}_0(x, 0)$}
\label{alg:eta_estimation}
\begin{algorithmic}[1]
\STATE \textbf{Input:} Full dataset $\mathcal{D} = \{\mathcal{D}_i\}_{i=1}^n$, Number of folds $J$ (e.g., $J=5$ or $10$).
\STATE Partition the $n$ observations into $J$ disjoint folds: $\mathcal{F}_1, \dots, \mathcal{F}_J$.
\STATE Initialize an empty list $\mathcal{D}_{\text{pseudo}}$ to store pseudo-outcomes.

\FOR{$j = 1$ \TO $J$}
    \STATE Define the training sample $\mathcal{D}^{[-j]} = \mathcal{D} \setminus \mathcal{F}_j$ (all data except fold $j$).
    \STATE Define the estimation sample $\mathcal{D}^{[j]} = \mathcal{F}_j$ (only data in fold $j$).
    
    \STATE \COMMENT{Train nuisance models on data \textbf{not} in fold $j$}
    \STATE Train $\hat{\mu}_0^{[-j]}(X, Y_1, 0)$ using $\mathcal{D}^{[-j]}$.
    \STATE Train $\hat{\gamma}^{[-j]}(X, Y_1, 0)$ using $\mathcal{D}^{[-j]}$.
    
    \STATE \COMMENT{Generate pseudo-outcomes for observations \textbf{in} fold $j$}
    \FOR{each observation $i \in \mathcal{D}^{[j]}$}
        \STATE Predict $\hat{\mu}_{i} = \hat{\mu}_0^{[-j]}(X_i, Y_{i1})$.
        \STATE Predict $\hat{\gamma}_{i} = \hat{\gamma}^{[-j]}(X_i, Y_{i1})$.
        
        \STATE \COMMENT{Clip $\hat{\gamma}_i$ for stability (see Section~\ref{sec:practical_guidance})}
        \STATE $\hat{\gamma}_{i} \leftarrow \max(\epsilon, \hat{\gamma}_{i})$ and $\hat{\gamma}_{i} \leftarrow \min(1-\epsilon, \hat{\gamma}_{i})$ for a small $\epsilon > 0$.
        
        \STATE \COMMENT{Compute the augmented pseudo-outcome from Theorem~\ref{th:oracle}}
        \STATE $\tilde{Y}_{i, \text{aug}} \leftarrow \hat{\mu}_{i} + \frac{R_{i0}}{\hat{\gamma}_{i}} (Y_{i0} - \hat{\mu}_{i})$
        
        \STATE \COMMENT{Store the result}
        \STATE Add $(X_i, \tilde{Y}_{i, \text{aug}})$ to $\mathcal{D}_{\text{pseudo}}$.
    \ENDFOR
\ENDFOR

\STATE \COMMENT{Train the final nested regression model}
\STATE Train $\hat{\eta}_0(x, 0)$ by regressing the generated pseudo-outcomes $\tilde{Y}_{i, \text{aug}}$ on their corresponding covariates $X_i$ using the full dataset $\mathcal{D}_{\text{pseudo}}$.
\STATE \textbf{Output:} The final estimator $\hat{\eta}_0(x, 0)$.
\end{algorithmic}
\end{algorithm}

\subsubsection{Implementation Guidance}
\label{sec:practical_guidance}
Our semiparametric framework, combined with cross-fitting, is specifically designed to accommodate flexible, data-adaptive machine learning (ML) models for the nuisance functions. For regression tasks (e.g., $\hat{\mu}_0$, $\hat{\mu}_1$, and the final $\hat{\eta}_0$), models like Random Forests or Gradient-Boosted Trees (GBTs) are excellent default choices. For propensity scores (e.g., $\hat{\pi}$, $\hat{\gamma}$), ML classifiers (e.g., Random Forest Classifier, Logistic Regression with flexible features) are recommended. In our own real-data application in Appendix Section~\ref{sec:lalonde}, we use Random Forests for all nuisance functions, which demonstrates their practical feasibility and effectiveness.

The estimators involve inverse probability weighting, particularly in the augmentation step. These terms can become unstable if the estimated probability $\hat{\gamma}$ is near zero. In such cases, it is standard practice to clip (or trim) predicted probabilities to prevent extreme weights. We recommend clipping all estimated probabilities (e.g., $\hat{\pi}$, $\hat{\gamma}$) to be bounded within a small range, for example, $[\epsilon, 1-\epsilon]$ where $\epsilon=0.01$ or $\epsilon=0.05$. This small amount of trimming ensures numerical stability at the cost of introducing a negligible, finite-sample bias, a standard trade-off in robust estimation.
}

\subsection{Conditional density approach}
\label{sec:density}
A second approach is to write the nested regression as
\begin{equation}
    \etatarget_0(x,0) = \int \mutarget_0(x,y_1,0) p(y_1|X=x,A=0)\,dy_1\,,
\end{equation}
and approximate the integral by first estimating the conditional density $p(y_1|X=x,A=0)$ and averaging over the values of $y_1$. Several methods have been proposed to achieve this, both from classical nonparametric statistics literature, such as histograms and kernels \citep{wasserman2006all}, and from more recent advancements, such as FlexCode, which rolls back the conditional density estimation problem to a series of regression problems \citep{izbicki2017converting}. With FlexCode, one first picks a system of orthonormal basis $\{\phi_j \}_{j=1}^\infty$, and then writes
\begin{equation}
    p(y_1|X=x,A=0) = \sum_{j=1}^\infty \beta_j(x,0) \phi_j(y_1)\,,
\end{equation}
where one can show that the basis expansion coefficient appearing in the previous equation is equal to $\beta_j(x,0) = \EE{\phi_j(Y_1)\mid X=x, A=0}$. Therefore, one can estimate these regressions individually and sum them back to get an estimate of the conditional distribution.

\section{Additional simulation results}
\label{sec:supp_sim}

\begin{table}[h!]
\centering
\caption{Simulation results for Assumption~\ref{ass:mar_simple}, $n=2000$.}
\label{tab:simresults_ass23}
\begin{tabular}{ccccc}
\toprule
\multicolumn{3}{c}{Nuisance Model Specification} & \multicolumn{1}{c}{Bias} & \multicolumn{1}{c}{RMSE} \\
\cmidrule(lr){1-3}
$\mutarget$ correct & $\pitarget$ correct & $\gammatarget$ correct & & \\
\midrule
\cmark & \cmark & \cmark & 0.004 & 0.103 \\
\cmark & \xmark & \cmark & 0.002 & 0.099 \\
\cmark & \cmark & \xmark & 0.007 & 0.115 \\
\cmark & \xmark & \xmark & 0.003 & 0.104 \\
\xmark & \cmark & \cmark & 0.053 & 1.363 \\
\xmark & \xmark & \cmark & 0.280 & 1.095 \\
\xmark & \cmark & \xmark & 3.328 & 4.082 \\
\xmark & \xmark & \xmark & 1.739 & 2.312 \\
\bottomrule
\end{tabular}
\end{table}

\begin{table}[h!]
\centering
\caption{Simulation results for Assumption~\ref{ass:mar_hard}, $n=2000$.}
\label{tab:simresults_ass24}
\begin{tabular}{cccccc}
\toprule
\multicolumn{4}{c}{Nuisance Model Specification} & \multicolumn{1}{c}{Bias} & \multicolumn{1}{c}{RMSE} \\
\cmidrule(lr){1-4}
$\mutarget$ correct & $\gammatarget$ correct & $\pitarget$ correct & $\etatarget$ correct & & \\
\midrule
\cmark & \cmark & \cmark & \cmark & 0.000 & 0.078 \\
\cmark & \cmark & \cmark & \xmark & 0.020 & 1.132 \\
\cmark & \cmark & \xmark & \cmark & 0.000 & 0.074 \\
\cmark & \cmark & \xmark & \xmark & 5.197 & 5.197 \\
\cmark & \xmark & \cmark & \cmark & 0.000 & 0.078 \\
\cmark & \xmark & \cmark & \xmark & 0.021 & 1.133 \\
\cmark & \xmark & \xmark & \cmark & 0.001 & 0.073 \\
\cmark & \xmark & \xmark & \xmark & 5.197 & 5.197 \\
\xmark & \cmark & \cmark & \cmark & 0.156 & 1.869 \\
\xmark & \cmark & \cmark & \xmark & 0.136 & 0.830 \\
\xmark & \cmark & \xmark & \cmark & 6.176 & 6.176 \\
\xmark & \cmark & \xmark & \xmark & 0.980 & 0.996 \\
\xmark & \xmark & \cmark & \cmark & 0.155 & 1.866 \\
\xmark & \xmark & \cmark & \xmark & 0.135 & 0.831 \\
\xmark & \xmark & \xmark & \cmark & 6.178 & 6.178 \\
\xmark & \xmark & \xmark & \xmark & 0.979 & 0.996 \\
\bottomrule
\end{tabular}
\end{table}

\begin{figure}[ht!]
    \centering
    \includegraphics[width=\linewidth]{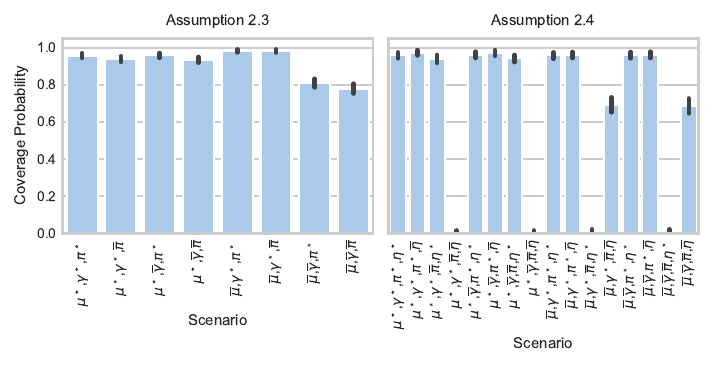}
    \caption{\lorenzo{Simulation results with $n=2000$. Barplot of empirical coverage from 500 simulation runs. The left column shows the performance of the estimator under Assumption~\ref{ass:mar_simple}, while the right column shows the performance under Assumption~\ref{ass:mar_hard}. Each scenario on the x-axis represents a combination of correctly specified (indicated by a star $\star$) and misspecified nuisance functions (indicated by a bar). The results visually confirm the multiple robustness property of our estimators.}}
    \label{fig:sim_results_2000_coverage}
\end{figure}

\begin{figure}[!t]
    \centering
    \includegraphics[width=\linewidth]{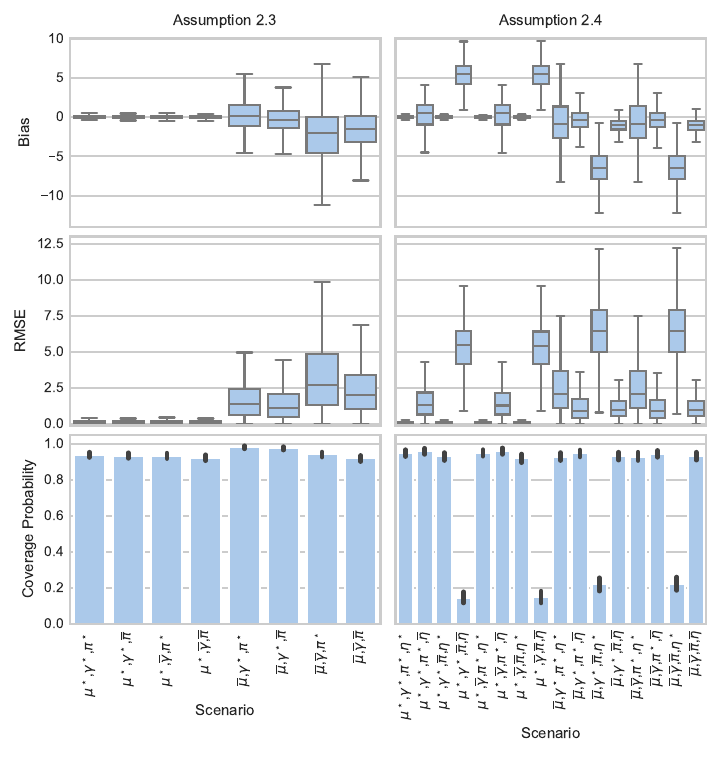}
    \caption{\lorenzo{Simulation results with $n=500$. Boxplots of bias (top row), root mean squared error (RMSE) (medium row), and barplot of empirical coverage (bottom row) from 500 simulation runs. The left column shows the performance of the estimator under Assumption~\ref{ass:mar_simple}, while the right column shows the performance under Assumption~\ref{ass:mar_hard}. Each scenario on the x-axis represents a combination of correctly specified (indicated by a star $\star$) and misspecified nuisance functions (indicated by a bar). The results visually confirm the multiple robustness property of our estimators. Both bias and RMSE are negligible when a sufficient subset of nuisance models is correctly specified. Performance degrades significantly, as theory predicts, when the conditions for multiple robustness are violated.}}
    \label{fig:sim_results_500}
\end{figure}

\begin{figure}[!t]
    \centering
    \includegraphics[width=\linewidth]{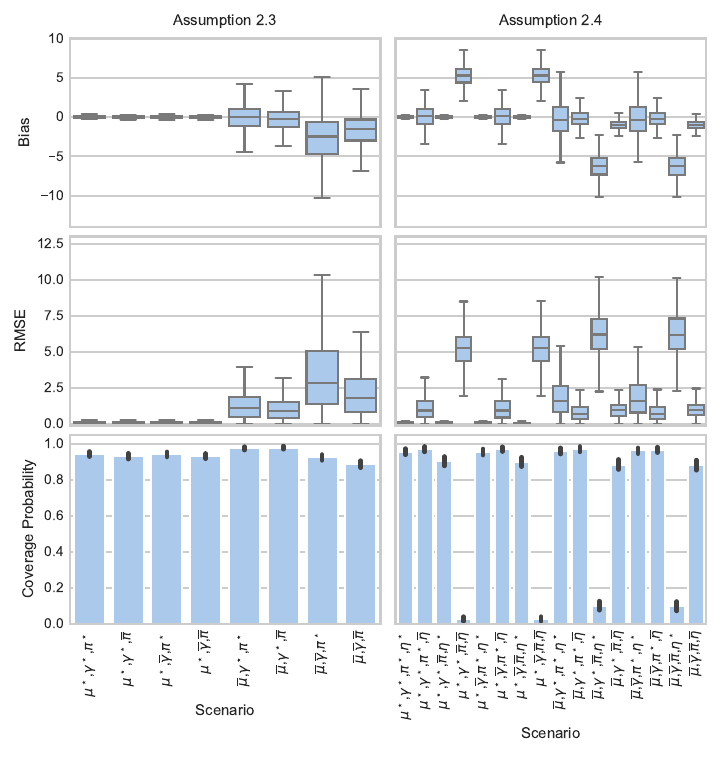}
    \caption{\lorenzo{Simulation results with $n=1000$. Boxplots of bias (top row), root mean squared error (RMSE) (medium row), and barplot of empirical coverage (bottom row) from 500 simulation runs. The left column shows the performance of the estimator under Assumption~\ref{ass:mar_simple}, while the right column shows the performance under Assumption~\ref{ass:mar_hard}. Each scenario on the x-axis represents a combination of correctly specified (indicated by a star $\star$) and misspecified nuisance functions (indicated by a bar). The results visually confirm the multiple robustness property of our estimators. Both bias and RMSE are negligible when a sufficient subset of nuisance models is correctly specified. Performance degrades significantly, as theory predicts, when the conditions for multiple robustness are violated.}}
    \label{fig:sim_results_1000}
\end{figure}

{\color{black}
\subsection{Signal-to-noise simulation study}
To further validate our theoretical claims, we conduct an additional simulation study designed to analyze the empirical coverage and confidence interval (CI) width of our estimator, and visualize the precise \enquote{breaking points} of our method when the multiple robustness conditions fail. In particular, we move beyond the \enquote{all correct vs.~all incorrect} simulation from the main text, measuring the performance of our estimator (under Assumption~\ref{ass:mar_simple} for simplicity) in a setting where the nuisance functions are misspecified to varying controlled degrees. This approach has proved successful and insightful in other simulation studies, such as \citet{das2024doubly, testa2025doubly, xu2025blockwise}. 

\subsubsection{Setup} 
We first describe the data generating process. We set the true ATT at $\thetatarget=5$. Instead of generating covariates and combining them into nuisance functions, we directly sample the nuisance components and we treat them as ground truths. This allows us to precisely control the degree of nuisance model misspecification. We therefore the dependence on $X$. A baseline mean $\mutarget_0(0)$ is drawn from $\Normal{5}{1}$, and a baseline effect $b$ is drawn from $\Normal{2}{0.5}$. The potential outcome regression functions are set as $\mutarget_0(0) = \mutarget_0(1)$, $\mutarget_{1}(0) = \mutarget_0(0) + b$, and $\mutarget_1(1) = \mutarget_1(0) + \thetatarget$. The true treatment probability $\pitarget$ and missingness probabilities $\gammatarget(0)$ and $\gammatarget(1)$ are drawn from $\mathcal{U}(0.1, 0.9)$. Using these true nuisance functions, a dataset of $n=1000$ observations is generated. The treatment indicator $A$ is drawn from a $\text{Bernoulli}(\pitarget)$. Potential outcomes $\Tilde{Y}_0$, $Y_{11}$, and $Y_{10}$ are generated by adding standard normal noise $\Normal{0}{1}$ to their respective true means. The observed post-treatment outcome $Y_1$ is set to $Y_{11}$ if $A=1$ and $Y_{10}$ if $A=0$. The missingness indicator $R_0$ is drawn from $\text{Bernoulli}(\gammatarget(1)$ if $A=1$ and $\text{Bernoulli}(\gammatarget(0))$ if $A=0$. The observed pre-treatment outcome $Y_{0}$ is set to $\Tilde{Y}_0$ if $R_0=1$ and NA otherwise.

We then define three \textit{correctness} parameters, $\alpha_\mu, \alpha_\pi, \alpha_\gamma \in [0, 1]$, which control the quality of the nuisance function estimates by defining estimates as convex combinations between true population quantities and random noise:
\begin{equation}
    \hat{\mu}_t(a) = \alpha_\mu \mutarget_t(a) + (1 - \alpha_\mu) \varepsilon_\mu\,,\quad \hat{\pi} = \alpha_\pi \pitarget + (1 - \alpha_\pi) \varepsilon_\pi\,,\quad \hat{\gamma}(a) = \alpha_\gamma \gammatarget(a) + (1 - \alpha_\gamma) \varepsilon_\gamma\,,
\end{equation}
where $t,a\in\{0,1\}$, $\varepsilon_\mu$, $\varepsilon_\pi$, and $\varepsilon_\gamma$ are appropriate noise components, i.e.~$\varepsilon_\mu\sim\Normal{0}{1}$, $\varepsilon_\pi, \varepsilon_\gamma \sim \mathcal{U}(0.1,0.9)$. An $\alpha$ value of 1 means the model is perfectly specified (equal to the truth, $\mutarget_t(a), \pitarget, \gammatarget(a)$), while an $\alpha$ value of 0 means the model is pure noise. The $\alpha$ parameters can be interpreted as rates of convergence of the nuisance functions to the corresponding population quantities. We repeat this experiment for 1000 simulations for each combination of $\alpha$ parameters on a grid from 0 to 1.

For each simulation, we compute the ATT estimate and a 95\% confidence interval using Gaussian approximation. We then report four metrics averaged over the 1000 simulations: absolute bias, root mean squared error (RMSE), empirical coverage (the fraction of CIs containing the true ATT), and CI width. 

\subsubsection{Results}
The results are presented in Figure \ref{fig:robustness_heatmap}. There, rows correspond to the degree of correctness of the propensity score model ($\alpha_\pi$). The four columns show the four metrics. Within each heatmap, the y-axis shows the degree of correctness of the outcome models ($\alpha_\mu$), and the x-axis shows the degree of correctness of the missingness models ($\alpha_\gamma$).

Our analysis of these results confirms the strong theoretical properties of our estimator. The first two columns (bias and RMSE) visually confirm the multiple robustness property. As predicted by theory for our estimator under Assumption~\ref{ass:mar_hard}, the estimator is consistent (bias and RMSE are near-zero) if:
\begin{itemize}
    \item \textit{Either} the outcome models are correct ($\alpha_\mu = 1$), which corresponds to the bottom row of each heatmaps. Note that bias is low regardless of the values of $\alpha_\pi$ or $\alpha_\gamma$.
    \item \textit{Or} the propensity score \textit{and} missingness models are correct ($\alpha_\pi = 1$ and $\alpha_\gamma = 1$). This corresponds to the last column of the heatmaps in the last row ($\alpha_\pi = 1$). The estimator is consistent here even when the outcome model is pure noise ($\alpha_\mu = 0$).
\end{itemize}
Conversely, when this condition fails -- for instance, in the top-left heatmap where all models are partially or fully misspecified (e.g., $\alpha_\mu = 0.0, \alpha_\gamma = 0.0, \alpha_\pi = 0.0$) -- the bias is massive (13.0), demonstrating the estimator's breaking point.

The third column on empirical coverage measures the inference performance of our estimator. When sufficient nuisance models are well-specified (i.e., when multiple robustness holds), empirical coverage is nominal. This is visible in the bottom row of all heatmaps ($\alpha_\mu=1$) and in the last column of the heatmap in the last row ($\alpha_\pi=1$, $\alpha_\gamma=1.0$). This confirms that the Gaussian approximation is reliable and provides valid inference when the estimator is consistent at sufficiently fast product rates.

Furthermore, the heatmaps precisely show how the confidence intervals mis-cover when the robustness conditions fail. For instance, in the heatmap in the first row ($\alpha_\pi=0$), if $\hat\mu$ is misspecified (e.g., $\alpha_\mu=0$), the coverage collapses to the range 0.11--0.35, depending on the quality of $\hat\gamma$. Even if $\hat\pi$ is correct ($\alpha_\pi=1$, bottom heatmap), if $\hat\mu$ and $\hat\gamma$ are both incorrect, the coverage still collapses (down to 0.19). This demonstrates that for inference to be valid, the multiple robustness condition must be met.

Finally, the fourth column, CI width, measures the efficiency of our estimator. The CI is narrowest (most efficient, width approx.~0.9) when all models are correctly specified (bottom-right cell of the last heatmap: $\alpha_\mu=1, \alpha_\pi=1, \alpha_\gamma=1$).
The CI is wider (less efficient, width approx.~1.2--1.5) when relying on only the outcome model for consistency (bottom row of the top-right heatmap). This result empirically confirms that while our estimator provides valid inference (correct coverage) as long as the multiple robustness condition holds, there are significant efficiency gains from correctly specifying all nuisance functions.
}

\begin{figure}[!ht]
    \centering
    \includegraphics[width=\textwidth]{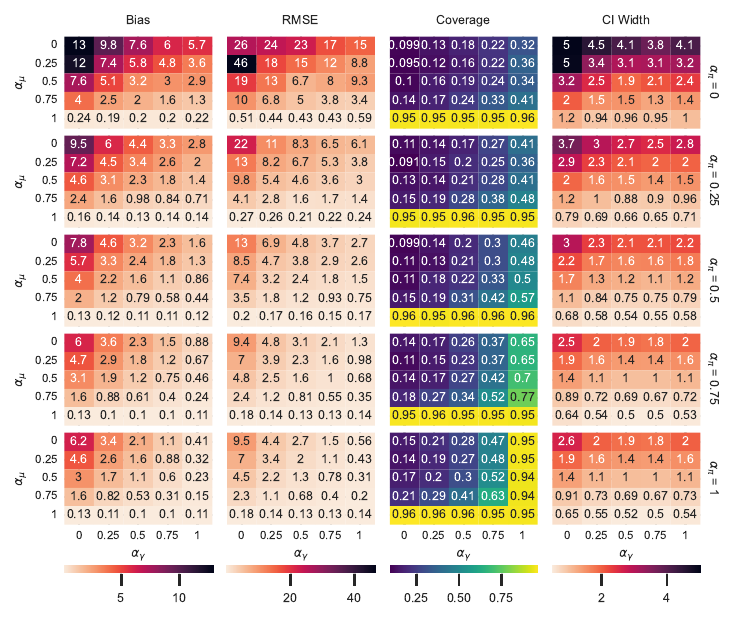}
    \caption{\lorenzo{Performance of the ATT estimator under varying degrees of nuisance function misspecification. The five rows correspond to the correctness of the propensity score model ($\alpha_\pi$). The columns show bias, RMSE, empirical coverage, and CI width. Within each heatmap, the y-axis shows the correctness of the outcome model ($\alpha_\mu$) and the x-axis shows the correctness of the missingness model ($\alpha_\gamma$). This visualization confirms the estimator's multiple robustness property and pinpoints its breaking points.}}
    \label{fig:robustness_heatmap}
\end{figure}

\section{Partially missing post-treatment outcome}

\subsection{When pre-treatment outcome is always observed}

\begin{assumption}[Outcome independent missing at random]
\label{ass:mar_simple_1}
Let the following MAR assumptions hold:
\begin{enumerate}[label=\textbf{\alph*.}]
    \item \textbf{No unmeasured confounding.} $Y_1 \indep R_1 \mid X,A$.
    \item \textbf{Weak overlap.} $\gammatarget(x,a) = \PP{R_1 = 1 \mid X=x, A=a} \in (0,1)$ almost surely for every $x\in\RR^p$ and $a\in\{0,1\}$.
\end{enumerate}
\end{assumption}

\begin{assumption}[Outcome dependent missing at random]
\label{ass:mar_hard_1}
Let the following MAR assumptions hold:
\begin{enumerate}[label=\textbf{\alph*.}]
    \item \textbf{No unmeasured confounding.} $Y_1 \indep R_1 \mid X,A,Y_0$.
    \item \textbf{Weak overlap.} $\gammatarget(x,y_0,a) = \PP{R_1 = 1 \mid X=x, Y_0=y_0, A=a} \in (0,1)$ almost surely for every $x\in\RR^p$, $y_0\in\RR$, and $a\in\{0,1\}$.
\end{enumerate}
\end{assumption}
Equipped with the previous sets of assumptions, we can now identify the ATT as a function of the observed data as shown in the following Lemma.
\begin{lemma}[Identification of ATT]
\label{lemma:identification_1}
Under Assumption~\ref{ass:identify} and Assumption~\ref{ass:mar_simple_1}, the ATT can be identified as a function of the observed data:
\begin{equation}
    \label{eq:mar_simple_id_1}
    \begin{split}
        \thetatarget &= \EE{\frac{A}{\EE{A}} \left(\EE{Y_1\mid X,A=1,R_1=1} - Y_0 \right)} \\
        &- \EE{\frac{A}{\EE{A}} \left(\EE{Y_1\mid X, A=0, R_1=1} - \EE{Y_0\mid X, A=0} \right)}\,.
    \end{split}
\end{equation}
Under Assumption~\ref{ass:identify} and Assumption~\ref{ass:mar_hard_1}, the ATT can be identified as a function of the observed data:
\begin{equation}
    \label{eq:mar_hard_id_1}
    \begin{split}
        \thetatarget &= \EE{\frac{A}{\EE{A}} \left( \EE{Y_1\mid X,Y_0, A=1, R_1=1} - Y_0 \right)} \\
        &- \EE{\frac{A}{\EE{A}} \left(\EE{\EE{Y_1\mid X, Y_0, A=0, R_1=1}\mid X,A=0} - \EE{Y_0 \mid X, A=0} \right)}\,.  
    \end{split}
\end{equation}
\end{lemma}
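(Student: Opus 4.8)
The plan is to reproduce the argument used for Lemma~\ref{lemma:identification}, now with the roles of the two time periods exchanged: the baseline outcome $Y_0$ is fully observed, so it enters the identifying formula directly, whereas the post-treatment outcome $Y_1$ is the quantity that must be recovered through the MAR assumption. As before, the backbone is (i) the ATT decomposition in Eq.~\ref{eq:att}; (ii) conditional parallel trends, Assumption~\ref{ass:identify}\,\textbf{a}, to transfer the counterfactual trend from the treated to the control group; (iii) consistency, Assumption~\ref{ass:identify}\,\textbf{b}, to pass from potential outcomes to observed ones on each arm; and (iv) the relevant MAR assumption to express conditional means of $Y_1$ in terms of the sub-population with $R_1=1$. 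Positivity and weak overlap ensure that every conditional expectation and the weight $A/\EE{A}$ are well defined.

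For the outcome-independent case (Assumption~\ref{ass:mar_simple_1}), I would write $\thetatarget = \EE{Y_1^{(1)} - Y_0 \mid A=1} - \EE{Y_1^{(0)} - Y_0 \mid A=1}$, condition internally on $X$ in both terms, and apply conditional parallel trends to rewrite $\EE{Y_1^{(0)} - Y_0 \mid X, A=1} = \EE{Y_1^{(0)} - Y_0 \mid X, A=0}$. Consistency then turns $Y_1^{(1)}$ into $Y_1$ on the treated arm and $Y_1^{(0)}$ into $Y_1$ on the control arm. On the treated arm the tower property collapses the conditioning on $X$, so $Y_0$ and $\EE{Y_1 \mid X, A=1}$ appear, and Assumption~\ref{ass:mar_simple_1}\,\textbf{a} gives $\EE{Y_1 \mid X, A=1} = \EE{Y_1 \mid X, A=1, R_1=1}$; on the control arm the conditioning on $X$ must be kept (it is forced by the form of parallel trends) and the same MAR step gives $\EE{Y_1 \mid X, A=0} = \EE{Y_1 \mid X, A=0, R_1=1}$, leaving $\EE{Y_0 \mid X, A=0}$ untouched. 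Rewriting the outer $\EE{\cdot \mid A=1}$ as $\EE{(A/\EE{A})\,\cdot}$ via iterated expectations then yields Eq.~\ref{eq:mar_simple_id_1}.

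For the outcome-dependent case (Assumption~\ref{ass:mar_hard_1}), the only modification is at the MAR step: since $Y_1 \indep R_1 \mid X, A, Y_0$, I would first insert an iterated expectation over $Y_0$ (conditioning on $X$ and $A=a$) and then invoke the assumption to get $\EE{Y_1 \mid X, Y_0, A=a} = \EE{Y_1 \mid X, Y_0, A=a, R_1=1}$. On the treated arm the outer expectation over $(X, Y_0)$ collapses against the $A=1$ population, so no nesting survives and $\EE{Y_1 \mid X, Y_0, A=1, R_1=1}$ enters directly; on the control arm the conditioning is over $X$ only, so the nested object $\EE{\EE{Y_1 \mid X, Y_0, A=0, R_1=1} \mid X, A=0}$ remains — and, exactly as noted in the main text for $\etatarget$, it cannot be collapsed by the tower property because the inner expectation implicitly conditions on $R_1=1$ while the outer one does not. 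Applying the $A/\EE{A}$ reweighting gives Eq.~\ref{eq:mar_hard_id_1}.

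The only real subtlety — and the step I expect to require the most care — is the bookkeeping of the surviving conditioning sets: recognizing that on the treated arm the tower property legitimately collapses everything to an unconditional expectation (so $Y_0$ appears bare), while on the control arm the specific form of conditional parallel trends, $Y_1^{(0)} - Y_0 \indep A \mid X$, pins the inner conditioning to $X$ alone, which is precisely what forces the nested regression to appear in the outcome-dependent formula and precludes its simplification. Everything else is a routine application of consistency, iterated expectations, and the change of measure from $\EE{\cdot \mid A=1}$ to $\EE{(A/\EE{A})\,\cdot}$.
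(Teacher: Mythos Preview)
Your proposal is correct and follows essentially the same route as the paper's own proof: the ATT decomposition of Eq.~\ref{eq:att}, iterated expectations over $X$ (and additionally $Y_0$ in the outcome-dependent case), conditional parallel trends to transfer the control trend, consistency to replace potential by observed outcomes, the relevant MAR assumption to introduce $R_1=1$, and the final reweighting $\EE{\cdot\mid A=1}=\EE{(A/\EE{A})\,\cdot}$. Your observation that the treated-arm conditioning collapses while the control-arm conditioning on $X$ must be retained (forcing the nested regression under Assumption~\ref{ass:mar_hard_1}) is exactly the bookkeeping the paper carries out.
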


\begin{proof}
Under Assumption~\ref{ass:identify} and Assumption~\ref{ass:mar_simple_1}, the following chain of equalities holds:
\begin{equation}
\begin{split}
    \thetatarget &= \EE{Y^{(1)}_1 - Y^{(0)}_1 \mid A=1} \\
    & = \EE{Y^{(1)}_1 - Y_0 \mid A=1} - \EE{Y^{(0)}_1 - Y_0 \mid A=1} \\
    & = \EE{\EE{Y^{(1)}_1\mid X, A=1} - Y_0 \mid A=1} - \EE{\EE{Y^{(0)}_1 - Y_0 \mid X, A=1} \mid A=1} \\
    & = \EE{\EE{Y_1\mid X, A=1, R_1=1}\mid A=1} - \EE{Y_0\mid A=1} - \EE{\EE{Y^{(0)}_1 - Y_0 \mid X, A=0} \mid A=1} \\
    & =  \EE{\EE{Y_1\mid X, A=1, R_1=1}\mid A=1} - \EE{Y_0\mid A=1} - \EE{\EE{Y_1 \mid X, A=0, R_1=1} \mid A=1}\\
    &+ \EE{\EE{Y_0 \mid X, A=0} \mid A=1} \\
    &= \EE{\frac{A}{\EE{A}} \left( \EE{Y_1\mid X, A=1, R_1=1} - Y_0 -\EE{Y_1\mid X, A=0, R_1=1} + \EE{Y_0\mid X, A=0} \right)}\,.
\end{split}
\end{equation}

Under Assumption~\ref{ass:identify} and Assumption~\ref{ass:mar_hard_1}, the following chain of equalities holds:
\begin{equation}
\begin{split}
    \thetatarget &= \EE{Y^{(1)}_1 - Y^{(0)}_1 \mid A=1} \\
    & = \EE{Y^{(1)}_1 - Y_0 \mid A=1} - \EE{Y^{(0)}_1 - Y_0 \mid A=1} \\
    & = \EE{\EE{Y^{(1)}_1\mid X,Y_0, A=1} - Y_0 \mid A=1} - \EE{\EE{Y^{(0)}_1 - Y_0 \mid X, A=1} \mid A=1} \\
    & = \EE{\EE{Y^{(1)}_1\mid X,Y_0, A=1}\mid A=1} - \EE{Y_0 \mid A=1} - \EE{\EE{Y^{(0)}_1 - Y_0 \mid X, A=0} \mid A=1} \\
    & = \EE{\EE{Y^{(1)}_1\mid X,Y_0, A=1, R_1=1}\mid A=1} - \EE{Y_0 \mid A=1} - \EE{\EE{\EE{Y_1\mid X,Y_0,A=0} \mid X, A=0} \mid A=1}\\
    &+ \EE{\EE{Y_0\mid X,A=0} \mid A=1} \\
    & = \EE{\EE{Y_1\mid X,Y_0, A=1, R_1=1} - Y_0 - \EE{\EE{Y_1 \mid X,Y_0, A=0, R_1=1 }\mid X, A=0} +\EE{Y_0 \mid X,A=0} \mid A=1} \\
    &= \EE{\frac{A}{\EE{A}} \left(\EE{Y_1\mid X,Y_0, A=1, R_1=1} - Y_0 -\EE{\EE{Y_1\mid X,Y_0, A=0, R_1=1} \mid X, A=0} + \EE{Y_0 \mid X,A=0} \right)}\,.
\end{split}
\end{equation}
\end{proof}

We can also derive the efficient influence functions for the previous identified targets. Their variances define the semiparametric efficiency bounds.

\begin{proposition}[Semiparametric efficiency bounds]
    \label{prop:eff_bound_1}
    Under Assumption~\ref{ass:identify} and Assumption~\ref{ass:mar_simple_1}, the efficient observed-data influence function is given by:
    \begin{equation}
    \begin{split}
        \influence{\data} &= \frac{A}{\EE{A}}\left( \left(\mutarget_1(X,1) + \frac{R_1}{\gammatarget(X,1)}\left(Y_1 - \mutarget_1(X,1)\right) \right) - Y_0 - \mutarget_1(X,0) + \mutarget_0(X,0)\right) \\
        &- \frac{(1-A)\pitarget(X)}{(1-\pitarget(X))\EE{A}} \left(\frac{R_1}{\gammatarget(X,0)}\left(Y_1 - \mutarget_1(X,0)\right) - Y_0 + \mutarget_0(X,0) \right) - \frac{A}{\EE{A}}\thetatarget\,.
    \end{split}
\end{equation}
    Under Assumption~\ref{ass:identify} and Assumption~\ref{ass:mar_hard_1}, the efficient observed-data influence function is given by:
    \begin{equation}
    \begin{split}
        \influence{\data} &= \frac{A}{\EE{A}}\left(\left(\mutarget_1(X,Y_0,1) + \frac{R_1}{\gammatarget(X,Y_0,1)}\left(Y_1 - \mutarget_1(X,Y_0,1)\right) \right) - Y_0 - \etatarget_1(X,0) + \mutarget_0(X,0) \right) \\
        &- \frac{(1-A)\pitarget(X)}{(1-\pitarget(X))\EE{A}} \left(\left(\mutarget_1(X,Y_0,0)  + \frac{R_1}{\gammatarget(X,Y_0,0)}\left(Y_1 - \mutarget_1(X,Y_0,0)\right) \right) - Y_0  - \etatarget_1(X,0) + \mutarget_0(X,0) \right) \\
        &- \frac{A}{\EE{A}}\thetatarget\,.
    \end{split}
\end{equation}
The semiparametric efficiency bound is given by $\VV{\influence{\data}}$.
\end{proposition}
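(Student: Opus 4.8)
The plan follows the same route as the proof of Proposition~\ref{prop:eff_bound}. In a nonparametric model the efficient influence function is the unique influence function, so it is enough to exhibit the candidate $\influence{\data}$, note that it has mean zero under $\jPtarget$ (immediate from the identification formulas of Lemma~\ref{lemma:identification_1}), and then verify that it reproduces the Von Mises expansion $\hat\theta_\text{plug-in} - \thetatarget = -\EE{\influence{\data;\hat\theta_\text{plug-in}}} + \kappa_2(\hat{\mathbbmss{P}},\jPtarget)$ with $\kappa_2$ a genuine second-order remainder, i.e.\ a finite sum of expectations of products of differences between estimated and true nuisances. The two candidates in the statement are the ``period-reflected'' analogues of Equation~\ref{eq:eif_simple} and Equation~\ref{eq:eif_hard}: the inverse-probability correction $R_0/\gammatarget(\cdot)$ that debiases the missing $Y_0$ in the main text becomes $R_1/\gammatarget(\cdot)$ debiasing the missing $Y_1$, the imputation slots carry $\mutarget_1$ instead of $\mutarget_0$, and under Assumption~\ref{ass:mar_hard_1} the relevant nested regression is $\etatarget_1(x,0)=\EE{\mutarget_1(x,Y_0,0)\mid X=x,A=0}$.

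Concretely, I would take $\hat\theta_\text{plug-in}$ to be the empirical plug-in of the expressions in Lemma~\ref{lemma:identification_1} evaluated at $(\hat\mu,\hat\pi,\hat\gamma,\hat\eta)$ and expand $\hat\theta_\text{plug-in} - \thetatarget + \EE{\influence{\data;\hat\theta_\text{plug-in}}}$ term by term, using the same three cancellations that appear in the proof of Proposition~\ref{prop:eff_bound}. First, for every inverse-probability block $\tfrac{R_1}{\hat\gamma(X,a)}\bigl(Y_1-\hat\mu_1(X,a)\bigr)$, conditioning on $(X,A)$ under Assumption~\ref{ass:mar_simple_1}, or on $(X,A,Y_0)$ under Assumption~\ref{ass:mar_hard_1} via $Y_1\indep R_1$, collapses the contribution to a product $(\hat\mu_1-\mutarget_1)\bigl(1-\gammatarget/\hat\gamma\bigr)$. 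Second, the propensity-reweighted control block $\tfrac{(1-A)\hat\pi(X)}{1-\hat\pi(X)}$ contributes a product $(\hat\mu_1-\mutarget_1)\tfrac{\hat\pi-\pitarget}{1-\hat\pi}$ and, under Assumption~\ref{ass:mar_hard_1}, an additional $(\hat\eta_1-\etatarget_1)\tfrac{\hat\pi-\pitarget}{1-\hat\pi}$. Third, the self-normalization by $\hatEE{A}$ produces the familiar $(\hat\theta_\text{plug-in}-\thetatarget)\bigl(1-\EE{A}/\hatEE{A}\bigr)$ term. Collecting these yields a $\kappa_2(\hat{\mathbbmss{P}},\jPtarget)$ of exactly the same shape as the two displays in the proof of Proposition~\ref{prop:eff_bound}, hence second-order; this certifies that the stated $\influence{\data}$ are the efficient influence functions, so the bound is $\VV{\influence{\data}}=\EE{\sqinfluence{\data}}$ by \citet{bickel1993efficient}. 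As a sanity check I would verify that setting $R_1\equiv1$ and $\gammatarget\equiv1$ collapses both expressions to $\finfluence{\data}$, recovering Proposition~1 of \citet{sant2020doubly}.

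The main obstacle is the nested regression $\etatarget_1(x,0)$ under Assumption~\ref{ass:mar_hard_1}. One has to keep track of the fact that the inner conditional expectation $\mutarget_1(x,Y_0,0)$ is taken among $R_1=1$ whereas the outer average over $Y_0\mid X,A=0$ is over all units, which is precisely why the estimator needs the two-step nested construction rather than a direct regression of $Y_1$ on $X$ — the same subtlety flagged in the main text, now with the roles of $Y_0$ and $Y_1$ exchanged. Showing that substituting $\hat\mu_1(x,Y_0,0)$ into $\hat\eta_1$ leaves the associated remainder second-order is exactly the content of Theorem~\ref{th:oracle} (with $Y_0$ playing the part there played by $Y_1$), which I would invoke directly. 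A secondary, bookkeeping-level difficulty is that the missing outcome $Y_1$ now enters the identification formula in two distinct places — as the treated group's realized post-treatment outcome and, through $\EE{Y_1\mid X,A=0,R_1=1}$, inside the control-group trend — so the inverse-probability debiasing for $Y_1$ appears both in the $A/\EE{A}$ block and in the $(1-A)\pitarget(X)/[(1-\pitarget(X))\EE{A}]$ block, and the cancellation in the first step above must be carried out separately for each block rather than quoted once.
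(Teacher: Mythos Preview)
Your proposal is correct and follows the same route as the paper, which simply states that the structure of the proof is identical to that of Proposition~\ref{prop:eff_bound}; you have in fact spelled out the period-reflected Von Mises computation in more detail than the paper does. One small remark: you do not need Theorem~\ref{th:oracle} here, since in the Von Mises expansion the $\etatarget_1$ contribution already enters as the product $(\hat\eta_1-\etatarget_1)(\hat\pi-\pitarget)/(1-\hat\pi)$, which is second-order by inspection regardless of how $\hat\eta_1$ is constructed; Theorem~\ref{th:oracle} concerns rates of estimation of $\hat\eta$ and is relevant for Theorem~\ref{th:an}, not for certifying the efficient influence function.
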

\begin{proof}
    The structure of the proof is the same as in \ref{prop:eff_bound}. 
\end{proof}

\subsection{When pre-treatment outcome can be missing}

\begin{assumption}[Missing at random]
\label{ass:mar_simple_01}
Let the following MAR assumptions hold:
\begin{enumerate}[label=\textbf{\alph*.}]
    \item \textbf{No unmeasured confounding.} $Y_t \indep R_t \mid X,A$ for $t\in\{0,1\}$.
    \item \textbf{Weak overlap.} $\gammatarget_t(x,a) = \PP{R_t = 1 \mid X=x, A=a} \in (0,1)$ almost surely for every $x\in\RR^p$, $a\in\{0,1\}$ and $t\in\{0,1\}$.
\end{enumerate}
\end{assumption}

Equipped with the previous sets of assumptions, we can now identify the ATT as a function of the observed data as shown in the following Lemma.
\begin{lemma}[Identification of ATT]
\label{lemma:identification_01}
Under Assumption~\ref{ass:identify} and Assumption~\ref{ass:mar_simple_01}, the ATT can be identified as a function of the observed data:
\begin{equation}
    \label{eq:mar_simple_id_01}
    \begin{split}
        \thetatarget &= \EE{\frac{A}{\EE{A}} \left(\EE{Y_1\mid X,A=1,R_1=1} - \EE{Y_0\mid X,A=1,R_0=1} \right)} \\
        &- \EE{\frac{A}{\EE{A}} \left(\EE{Y_1\mid X,A=0,R_1=1} - \EE{Y_0\mid X,A=0,R_0=1} \right)}\,.
    \end{split}
\end{equation}
\end{lemma}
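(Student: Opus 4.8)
The plan is to mimic exactly the identification arguments already carried out in the proof of Lemma~\ref{lemma:identification} and in the proof of Lemma~\ref{lemma:identification_1}, splicing the two missingness mechanisms together. Starting from the decomposition $\thetatarget = \EE{Y^{(1)}_1 - Y_0 \mid A=1} - \EE{Y^{(0)}_1 - Y_0 \mid A=1}$, I would condition on $X$ inside each expectation and apply the conditional parallel trends assumption (Assumption~\ref{ass:identify}\textbf{a}) to swap $\EE{Y^{(0)}_1 - Y_0 \mid X, A=1}$ for $\EE{Y^{(0)}_1 - Y_0 \mid X, A=0}$, so that the target is written entirely in terms of $\EE{Y_1^{(a)} \mid X, A=a}$ and $\EE{Y_0 \mid X, A=a}$ for $a \in \{0,1\}$, using consistency (Assumption~\ref{ass:identify}\textbf{b}) to replace potential outcomes by the observed $Y_1$.

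The remaining step is to handle the two conditional means that are not directly observed because of missingness. For the pre-treatment outcome I would invoke Assumption~\ref{ass:mar_simple_01}\textbf{a} with $t=0$, i.e. $Y_0 \indep R_0 \mid X, A$, which gives $\EE{Y_0 \mid X, A=a} = \EE{Y_0 \mid X, A=a, R_0=1}$; the weak overlap condition in Assumption~\ref{ass:mar_simple_01}\textbf{b} with $t=0$ ensures the conditioning event $\{R_0=1\}$ has positive probability so this is well defined. Symmetrically, for the post-treatment outcome I would use $Y_1 \indep R_1 \mid X, A$ (the $t=1$ case of Assumption~\ref{ass:mar_simple_01}\textbf{a}), together with consistency, to write $\EE{Y_1^{(a)} \mid X, A=a} = \EE{Y_1 \mid X, A=a} = \EE{Y_1 \mid X, A=a, R_1=1}$, again with positivity of $\{R_1=1\}$ from Assumption~\ref{ass:mar_simple_01}\textbf{b} with $t=1$. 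Substituting both back and using $\PP{A=1}>0$ (Assumption~\ref{ass:identify}\textbf{c}) to rewrite $\EE{\cdot \mid A=1} = \EE{A \cdot / \EE{A}}$ yields exactly Equation~\ref{eq:mar_simple_id_01}.

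There is no real obstacle here beyond bookkeeping: the only subtlety, already flagged in the paper's discussion of $\etatarget_t$, is that the MAR condition in this ``outcome independent'' setting conditions only on $(X,A)$ and not on the other time period's outcome, so no nested expectation is needed and the law of iterated expectations applies cleanly when pushing the outer $\EE{\cdot \mid A=1}$ through. The argument is strictly a combination of the two one-sided identification proofs, so I would present it compactly as a single chain of equalities, noting at each step which assumption justifies the equality.
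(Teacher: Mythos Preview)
Your proposal is correct and follows essentially the same route as the paper's own proof: the paper also starts from the decomposition $\thetatarget = \EE{Y^{(1)}_1 - Y_0 \mid A=1} - \EE{Y^{(0)}_1 - Y_0 \mid A=1}$, conditions on $X$, invokes conditional parallel trends to swap $A=1$ for $A=0$ in the second term, and then uses the MAR conditions at $t=0$ and $t=1$ separately to insert $R_0=1$ and $R_1=1$ into the respective conditional means before rewriting $\EE{\cdot \mid A=1}$ as $\EE{A\,\cdot/\EE{A}}$. There is nothing to add.
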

\begin{proof}
Under Assumption~\ref{ass:identify} and Assumption~\ref{ass:mar_simple_01}, the following chain of equalities holds:
\begin{equation}
\begin{split}
    \thetatarget &= \EE{Y^{(1)}_1 - Y^{(0)}_1 \mid A=1} \\
    & = \EE{Y^{(1)}_1 - Y_0 \mid A=1} - \EE{Y^{(0)}_1 - Y_0 \mid A=1} \\
    & = \EE{\EE{Y^{(1)}_1\mid X, A=1} - \EE{Y_0\mid X,A=1} \mid A=1} - \EE{\EE{Y^{(0)}_1 - Y_0 \mid X, A=1} \mid A=1} \\
    & = \EE{\EE{Y^{(1)}_1\mid X, A=1} - \EE{Y_0\mid X,A=1} \mid A=1} - \EE{\EE{Y^{(0)}_1 - Y_0 \mid X, A=0} \mid A=1} \\
    & =  \EE{\EE{Y_1\mid X, A=1, R_1=1}\mid A=1} - \EE{\EE{Y_0\mid X,A=1,R_0=1}\mid A=1} \\
    & - \EE{\EE{Y_1 \mid X, A=0, R_1=1} \mid A=1} + \EE{\EE{Y_0\mid X,A=0,R_0=1}\mid A=1} \\
    &=\EE{\frac{A}{\EE{A}} \left(\EE{Y_1\mid X,A=1,R_1=1} - \EE{Y_0\mid X,A=1,R_0=1} \right)} \\
    &- \EE{\frac{A}{\EE{A}} \left(\EE{Y_1\mid X,A=0,R_1=1} - \EE{Y_0\mid X,A=0,R_0=1} \right)}\,.
\end{split}
\end{equation}
\end{proof}
We can also derive the efficient influence function for the previous identified target. Its variance defines the semiparametric efficiency bound.
\begin{proposition}[Semiparametric efficiency bounds]
    Under Assumption~\ref{ass:identify} and Assumption~\ref{ass:mar_simple_01}, the efficient observed-data influence function is given by:
    \begin{equation}
    \begin{split}
        \influence{\data} &= \frac{A}{\EE{A}}\left( \left(\mutarget_1(X,1) + \frac{R_1}{\gammatarget_1(X,1)}\left(Y_1 - \mutarget_1(X,1)\right) \right) - \left(\mutarget_0(X,1) + \frac{R_0}{\gammatarget_0(X,1)}\left(Y_0 - \mutarget_0(X,1)\right) \right)\right) \\
        &- \frac{A}{\EE{A}}\left(\mutarget_1(X,0) - \mutarget_0(X,0)\right) \\
        &- \frac{(1-A)\pitarget(X)}{(1-\pitarget(X))\EE{A}} \left( \frac{R_1}{\gammatarget_1(X,0)}\left(Y_1 - \mutarget_1(X,0)\right) -  \frac{R_0}{\gammatarget_0(X,0)}\left(Y_0 - \mutarget_0(X,0)\right) \right) - \frac{A}{\EE{A}}\thetatarget\,.
    \end{split}
\end{equation}
The semiparametric efficiency bound is given by $\VV{\influence{\data}}$.
\end{proposition}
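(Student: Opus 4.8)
The observed-data model here is fully nonparametric, so, exactly as in the proof of Proposition~\ref{prop:eff_bound}, it suffices to exhibit a mean-zero, finite-variance function $\influence{\data}$ for which the plug-in estimator of the functional identified in Lemma~\ref{lemma:identification_01} admits a Von Mises expansion $\hat\theta_\text{plug-in} - \thetatarget = -\EE{\influence{\data;\hat{\mathbbmss{P}}}} + \kappa_2(\hat{\mathbbmss{P}},\jPtarget)$ with a genuinely second-order remainder $\kappa_2$; any such $\influence{}$ is then the unique efficient influence function, and its variance is the semiparametric efficiency bound (see Lemma~2 of \citet{kennedy2023semiparametric} and the review in \citet{kennedy2024semiparametric}). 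The plan is: (i) rewrite the identified target in simplified form; (ii) verify $\EE{\influence{\data}}=0$ under $\jPtarget$; (iii) expand $\hat\theta_\text{plug-in} - \thetatarget + \EE{\influence{\data;\hat{\mathbbmss{P}}}}$ and collect $\kappa_2$; (iv) check that $\kappa_2$ is second-order.

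For step (i), the no-unmeasured-confounding clause of Assumption~\ref{ass:mar_simple_01} gives $\EE{Y_t\mid X,A,R_t=1}=\mutarget_t(X,A)$, so Lemma~\ref{lemma:identification_01} collapses to the familiar Sant'Anna--Zhao ATT functional $\thetatarget=\EE{\frac{A}{\EE{A}}(\mutarget_1(X,1)-\mutarget_1(X,0)-\mutarget_0(X,1)+\mutarget_0(X,0))}$, with the caveat that each $\mutarget_t$ must be learned from the $R_t=1$ stratum. This makes the shape of the claimed $\influence{\data}$ transparent: it is exactly the full-data efficient influence function $\finfluence{\data}$ of \citet{sant2020doubly} displayed in the remark following Proposition~\ref{prop:eff_bound}, with every occurrence of $Y_t$ replaced by its coarsening-at-random augmentation $\mutarget_t(X,A)+\frac{R_t}{\gammatarget_t(X,A)}(Y_t-\mutarget_t(X,A))$ — the treated arm using $\gammatarget_t(X,1)$ and the control arm using $\gammatarget_t(X,0)$, after which the $\mutarget_1(X,0)$ and $\mutarget_0(X,0)$ terms in the control arm cancel to leave precisely the stated third line. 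For step (ii), both augmentation residuals have conditional mean zero, since $\EE{R_t(Y_t-\mutarget_t(X,A))\mid X,A}=\gammatarget_t(X,A)\,\EE{Y_t-\mutarget_t(X,A)\mid X,A}=0$ by $Y_t\indep R_t\mid X,A$; what remains of $\EE{\influence{\data}}$ is then the identified expression minus $\thetatarget$, hence zero.

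For steps (iii)--(iv) the expansion decouples into a post-treatment ($t=1$) arm and a pre-treatment ($t=0$) arm, and each arm is a verbatim instance of the single-missing-outcome computation already performed in the proof of Proposition~\ref{prop:eff_bound}. Expanding term by term, $\kappa_2$ comes out as a finite sum of doubly-robust products of nuisance errors, of the two shapes $\EE{(\hat\mu_t(X,a)-\mutarget_t(X,a))(1-\gammatarget_t(X,a)/\hat\gamma_t(X,a))}$ and $\EE{(\hat\mu_t(X,0)-\mutarget_t(X,0))(\hat\pi(X)-\pitarget(X))/(1-\hat\pi(X))}$ for $t\in\{0,1\}$ and $a\in\{0,1\}$, plus the normalization term $(\hat\theta_\text{plug-in}-\thetatarget)(1-\EE{A}/\hatEE{A})$; each is a product of two discrepancies between $\hat{\mathbbmss{P}}$ and $\jPtarget$ and hence second-order, which closes the argument. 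I expect no conceptual obstacle — the $t=0$ and $t=1$ arms are genuinely separate, each reducing to a calculation already in the paper — so the main difficulty is purely organizational: carrying the $R_t/\gammatarget_t$ weights cleanly through the expansion and matching each first-order term generated by perturbing the plug-in identification formula with its canceling counterpart inside $\influence{}$, together with the mild ratio subtlety from conditioning on $A=1$, handled as in Proposition~\ref{prop:eff_bound}.
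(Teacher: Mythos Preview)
Your proposal is correct and takes essentially the same approach as the paper: the paper's own proof consists solely of the remark that ``the structure of the proof is the same as in Proposition~\ref{prop:eff_bound},'' and your write-up fleshes out exactly that reduction, noting that the $t=0$ and $t=1$ arms decouple and each reproduces the single-missing-outcome Von Mises computation with a second-order remainder. If anything, your outline is more explicit than what the paper provides.
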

\begin{proof}
    The structure of the proof is the same as in \ref{prop:eff_bound}. 
\end{proof}

{\color{black}
 \section{Real-data example}
 \label{sec:lalonde}
To address the practical utility and empirical validation of our estimators, we provide a real-world application using the \citet{lalonde1986evaluating} dataset. This dataset is a canonical benchmark in the causal inference literature, as it allows for the comparison of non-experimental estimators against a known, credible Average Treatment Effect on the Treated (ATT) derived from a randomized controlled trial \citep{imbens2024lalonde}.

\subsection{Setup}
The original Lalonde dataset analyzes the effect of the National Supported Work (NSW) job training program on post-treatment earnings. We use the experimental benchmark ATT, \$1794.34, as the \textit{ground truth} for our comparison. To simulate the exact problem our paper addresses, we utilize a version of the Lalonde dataset where the pre-treatment outcome (1974 earnings, $re74$) is only partially observed. This specific dataset, which is publicly available, was created by \citet{yang2023propensity}. This setup creates a challenging and realistic scenario where simply discarding observations with missing pre-treatment data -- a \textit{complete-case} analysis -- is expected to introduce significant selection bias. 

We estimate the ATT of the training program on 1978 earnings ($re78$) using our two proposed estimators (under Assumptions~\ref{ass:mar_simple} and~\ref{ass:mar_hard}). We compare their performance against three common baseline estimators:
\begin{itemize}
    \item Difference-in-Means. A naive comparison of mean $re78$ between the treated and control groups.
    \item Difference-in-Differences (complete-cases). The standard DiD estimator applied only to the subset of the sample where $re74$ is observed.
    \item \texttt{DR-DiD} (complete-cases): The doubly-robust DiD estimator \citep{sant2020doubly} also applied only to the complete-case samples.
\end{itemize}
For all estimators, nuisance functions (where applicable) are estimated using Random Forests classifiers and regressors.

\subsection{Results}

\begin{figure}[!t]
    \centering
     \includegraphics[width=\linewidth]{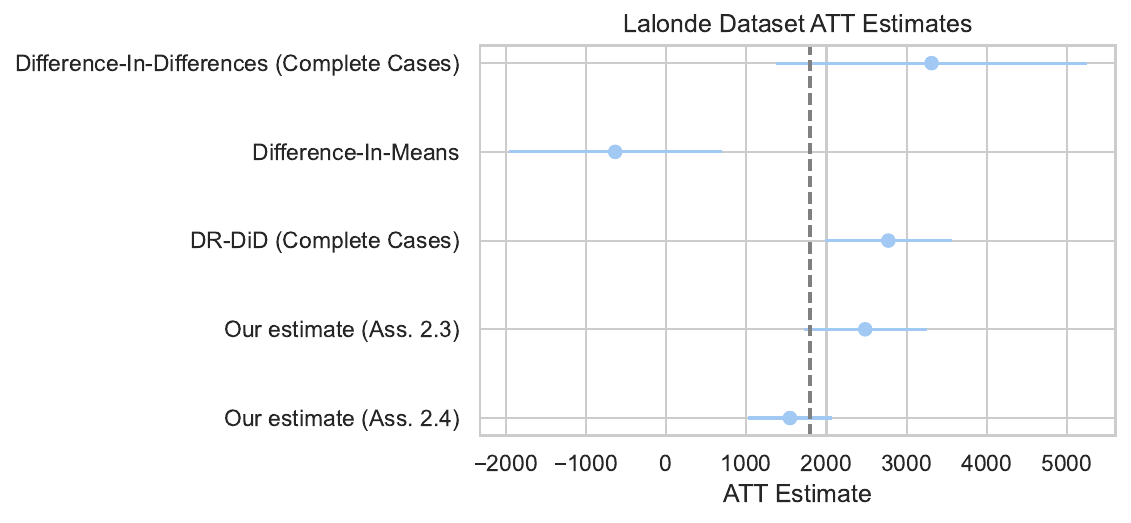}
    \caption{\lorenzo{ATT estimates on the \citet{lalonde1986evaluating} dataset with MAR pre-treatment outcomes, as provided by \citet{yang2023propensity}. The plot compares point estimates and 95\% confidence intervals for our proposed estimators (under Assumptions~\ref{ass:mar_simple} and~\ref{ass:mar_hard}) against three baselines (Difference-in-Means, complete-case DiD, and complete-case \texttt{DR-DiD}). The dashed vertical line indicates the experimental benchmark ATT (\$1794.34). While the baseline and complete-case methods are severely biased, our estimates provide a more principled assessment of the causal effect of the training program on participants' earnings.}}
    \label{fig:lalonde}
\end{figure}

The results of this empirical application are summarized in Figure~\ref{fig:lalonde}. The plot clearly demonstrates the practical failure of naive and complete-case methods in this MAR setting. In fact, the Difference-in-Means estimator is severely biased, yielding a negative and statistically insignificant estimate (-\$635.03). Estimates relying on a complete-case analysis are also highly biased and misleading. The standard DiD (\$3311.02) and the \texttt{DR-DiD} (\$2771.82) both substantially overestimate the true effect. Critically, the 95\% confidence intervals for the Difference-in-Means and the \texttt{DR-DiD} estimators fail to cover the true experimental benchmark of \$1794.34.

In sharp contrast, our proposed estimators, which are designed to handle missing data as arising in this challenge, perform well. Under Assumption~\ref{ass:mar_simple}, our estimate is \$2483.18; under Assumption~\ref{ass:mar_hard}, our estimate is \$1544.70. Both estimates are closer to the experimental benchmark ATT. Moreover, the 95\% confidence interval of our estimators, both under Ass.~\ref{ass:mar_simple} and Ass.~\ref{ass:mar_hard}, successfully contains the experimental benchmark.

This empirical application validates the practical importance of our framework. It shows that by correctly and efficiently accounting for the missing data mechanism, our estimators can correct for the severe selection bias that invalidates standard methods, thereby recovering a credible estimate of the true causal effect.

\subsection{Further robustness comparisons}

\begin{figure}[!t]
    \centering
    \includegraphics[width=0.9\linewidth]{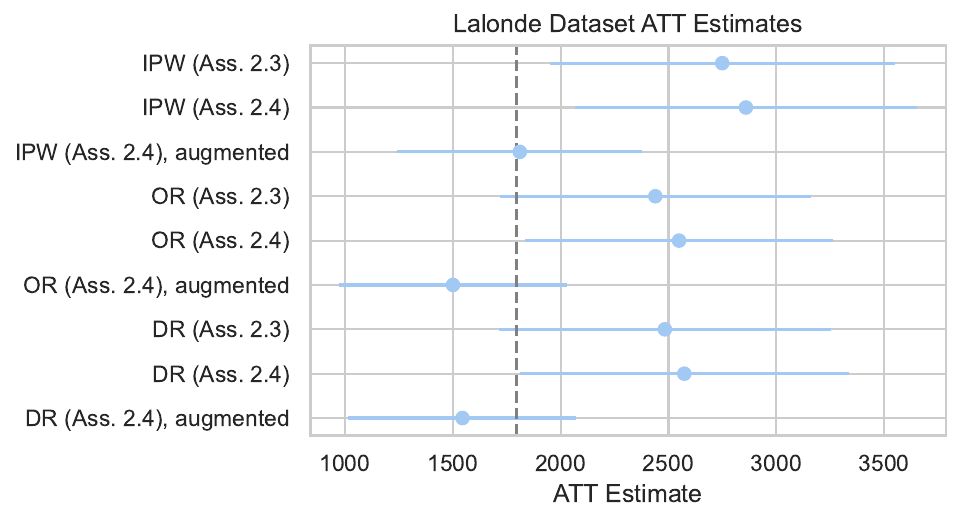}
    \caption{\lorenzo{Comparison of singly-robust and doubly-robust ATT estimates on the Lalonde dataset. The plot shows estimates from IPW-only (relying on $\gammatarget$), OR-only (relying on $\mutarget$), and DR (relying on both) estimators. \textit{Augmented} refers to the estimation of the nested regression $\etatarget$ using the method in Theorem~\ref{th:oracle}.}}
    \label{fig:lalonde_robust}
\end{figure}

To further probe the robustness properties of our proposed estimators, we conduct an additional analysis on the \citet{lalonde1986evaluating} dataset. We compare the performance of our estimators against singly-robust variants:
\begin{itemize}
    \item IPW (Inverse Probability Weighting). Estimators that deal with missing observations in $Y_0$ relying only on the missingness ($\gammatarget$) models.
    \item OR (Outcome Regression). Estimators that deal with missing observations in $Y_0$ relying only on the outcome regression ($\mutarget$) models.
    \item DR (Doubly Robust). Our proposed estimators, which utilize both $\gammatarget$ and $\mutarget$ nuisance functions.
\end{itemize}
Furthermore, for estimators under Assumption~\ref{ass:mar_hard}, we compare standard non-augmented versions against the \textit{augmented} versions. The non-augmented estimators learn the nested regression $\etatarget$ using only the outcome model $\hat{\mu}$. The augmented estimators, by contrast, use the strategy from Theorem~\ref{th:oracle}, incorporating both $\hat{\mu}$ and $\hat{\gamma}$ to estimate $\etatarget$.

The results, shown in Figure~\ref{fig:lalonde_robust}, are striking. All non-augmented estimators are overestimating the target. Moreover, 95\% confidence intervals based on the non-augmented IPW estimators fail to cover the \$1794.34 experimental benchmark, as well as non-augmented estimators based on Assumption~\ref{ass:mar_hard}. All the other estimators provide instead correct coverage. Figure~\ref{fig:lalonde_robust} highlight the usefulness of our proposed augmentation strategy. While the non-augmented estimators for Assumption~\ref{ass:mar_hard} present some bias, the augmented versions of all three estimators -- IPW, OR, and DR -- perform exceptionally well. All three point estimates are clustered closely around the true benchmark, and their confidence intervals cover the true value. This powerfully illustrates the practical value of the augmentation described in Theorem~\ref{th:oracle}. In a challenging, real-world scenario where all nuisance function models are unknown, the augmentation provides crucial protection against bias and is essential for recovering a credible causal estimate.
}

\end{document}